\newtheorem{theorem}{Theorem}
\newtheorem{proposition}{Proposition}
\theoremstyle{definition}
\newtheorem{definition}{Definition}
\theoremstyle{definition}
\newtheorem{assumption}{Assumption}
\theoremstyle{definition}
\theoremstyle{definition}
\newtheorem{problem}{Problem}
\theoremstyle{remark}
\newtheorem{remark}{Remark}
\title{\LARGE \bf
Model-Based Reinforcement Learning for Approximate Optimal Control with Temporal Logic Specifications
}
\author{Max H. Cohen and Calin Belta%
\thanks{The authors are with the Department of Mechanical Engineering, Boston University, Boston, MA \texttt{\{maxcohen,cbelta\}@bu.}}
}
\begin{document}

\maketitle
\thispagestyle{empty}
\pagestyle{empty}

\begin{abstract}
In this paper we study the problem of synthesizing optimal control policies for uncertain continuous-time nonlinear systems from syntactically co-safe linear temporal logic (scLTL) formulas. We formulate this problem as a sequence of reach-avoid optimal control sub-problems.  We show that the resulting hybrid optimal control policy guarantees the satisfaction of a given scLTL formula by constructing a barrier certificate. Since solving each optimal control problem may be computationally intractable, we take a learning-based approach to approximately solve this sequence of optimal control problems online without requiring full knowledge of the system dynamics. Using Lyapunov-based tools, we develop sufficient conditions under which our approximate solution maintains correctness. Finally, we demonstrate the efficacy of the developed method with a numerical example.
\end{abstract}

\section{Introduction}
The intersection of hybrid systems \cite{Teel_book}, formal methods \cite{ModelChecking_book}, and control theory provides a pathway towards controlling dynamical systems to achieve objectives richer than stability \cite{Belta_book}. These objectives can range from simple reachability and safety specifications \cite{TomlinAutomatica99,TomlinTAC05} to more complex specifications such as those expressed as temporal logic (TL) formulas. Classical approaches to designing controllers for dynamical systems subject to TL specifications involve \emph{abstraction-based} techniques \cite{Belta_book,Tabuada_book}. In this setting, the dynamical system is abstracted into a finite transition system (FTS) whereas the specification, often represented as a linear TL (LTL) formula, is captured by a finite-state automaton (FSA). This FTS and FSA can then be combined to form a product automaton and the problem of synthesizing a correct (with respect to the specification) control policy for the original system is reduced to solving a B\"{u}chi or Rabin game over the product automaton. These approaches have been extensively studied in the literature \cite{BeltaTAC08,TabuadaTAC06}; however, a well-documented limitation is the computational burden associated with the abstraction process.

To address the limitations of abstraction-based control design, more recent work has focused on \emph{optimization-based} control with TL constraints \cite{BeltaARCRAS19}. Here, TLs with quantitative semantics, such as signal temporal logic (STL) \cite{STL}, are leveraged to form optimization problems with the goal of maximizing the satisfaction of the desired specification. These approaches often rely on encoding Boolean satisfaction of STL formulas as mixed-integer constraints, leading to mixed-integer linear/quadratic programs (MILP/MIQP) that can be solved within model predictive control (MPC) frameworks \cite{MurrayHSCC15,SadraAllerton15}. Although these approaches avoid the computationally prohibitive abstraction process, solving MILPs/MIQPs in a receding horizon fashion is still computationally intensive and may prevent real-time implementation. To address this limitation, other approaches have explored using control barrier functions (CBFs) \cite{AmesTAC17} to encode subsets of LTL or STL formulas within more computationally efficient quadratic programs \cite{CooganCDC18,LarsLCSS19}. 

Recently, optimization-based control from TL specifications has also been explored within more traditional Hamilton-Jacobi-Bellman (HJB)-based optimal control frameworks \cite{TopcuCDC16,TopcuHSCC17,VamvoudakisACC20}. Similar to abstraction-based approaches, these methods rely on translating a TL formula into an FSA but circumvent the computational expense of abstracting the dynamical system. The authors in \cite{TopcuCDC16,TopcuHSCC17} augment a continuous dynamical system with the discrete behavior of an FSA corresponding to a syntactically cosafe LTL (scLTL) formula to form a hybrid dynamical system. The authors then solve a mixed continuous-discrete-time HJB equation corresponding to an optimal control problem to obtain a correct-by-construction hybrid feedback controller. For linear systems, this problem can be solved through semi-definite programming; however, for nonlinear systems this same approach results in a semi-infinite program, which is difficult to solve in general. Other recent work in this domain includes \cite{VamvoudakisACC20}, where the authors propose an approximate dynamic programming (ADP) approach to solve a sequence of two-point boundary value problems, whose solution yields a control policy capable of satisfying an scLTL formula.

With the exception of accounting for bounded external disturbances, the aforementioned approaches rely on exact model knowledge. In practice, however, models often exhibit some degree of uncertainty, which motivates the use of \emph{learning-based} control methods that can handle uncertainty while preserving formal guarantees. One way to address this challenge is to take a completely model-free approach as in \cite{CooganCDC14,BeltaCDC16}; however, it is generally difficult to maintain formal guarantees in a model-free setting \cite{SadraHSCC18}. An alternative approach is to take a model-based approach wherein an approximate model of the system is identified, which is then used to establish formal guarantees. Along these lines, the authors in \cite{SadraCDC17} leverage techniques from adaptive control to control discrete-time systems with uncertain parameters from LTL specifications; however, the approach is abstraction-based. On the other hand, the authors in \cite{SadraHSCC18} use input-output data to construct a set-valued piecewise affine model of an uncertain system that is then used in a MPC framework to synthesize a controller from an STL formula. Although this approach can provide formal guarantees for uncertain systems, identifying the uncertain system is computationally intensive and cannot be performed during run-time.

In this paper, we seek to combine optimality, correctness, and learning to synthesize approximately optimal controllers for uncertain systems with formal guarantees \emph{online}. Specifically, we consider deterministic continuous-time nonlinear control-affine systems with unknown drift terms subject to specifications expressed as scLTL formulas defined over a finite set of observations. To link the continuous behavior of the system with the discrete behavior of the scLTL formula's FSA, we form a hybrid system \cite{Teel_book} and associate with each observation of the formula a subset of the state-space, referred to as \emph{regions of interest} (ROI) throughout this paper. Leveraging the approach from \cite{CohenCDC20}, we construct a sequence of reach-avoid optimal control problems to regulate the system from ROI to ROI while avoiding ``bad" regions in the state-space. We then show that the value functions obtained from solving the corresponding sequence of HJB equations can be used to form a hybrid barrier certificate \cite{DimosACC18} to establish formal guarantees. However, solving these HJB equations can be computationally intractable and requires perfect knowledge of the system dynamics. To address this challenge, we leverage a computationally efficient model-based reinforcement learning approach \cite{RushiAutomatica16,RushiAutomatica16_StaF,RushiTNNLS19} to learn the unknown dynamics and value functions simultaneously \emph{online}. We show that, under suitable conditions, the resulting approximate optimal control policy provides formal guarantees of correctness with respect to the given specification.

The remainder of this paper is organized as follows. Section \ref{sec:prelim} covers preliminaries related to scLTL and hybrid systems \cite{Teel_book}. Section \ref{sec:formulation} formulates the problem. Sections \ref{sec:DTA} and \ref{sec:opt_ctr} present our optimal control framework. Section \ref{sec:RL} outlines our reinforcement learning-based solution approach. Section \ref{sec:sim} provides numerical examples that support the theoretical findings and Section \ref{sec:conclusion} contains concluding marks and a discussion on future research directions.

\section{Preliminaries}\label{sec:prelim}
\subsection{Notation}
For a set $\mathcal{C}$, the notations $\partial\mathcal{C},\,\text{Int}(\mathcal{C}),\,\text{cl}(\mathcal{C}),\,2^\mathcal{C}$ denote the boundary, interior, closure, and power set of $\mathcal{C}$, respectively. A continuous function $\alpha\,:\,\mathbb{R}_{\geq0}\rightarrow\mathbb{R}_{\geq0}$ is a class $\mathcal{K}$ function if $\alpha(0)=0$ and it is strictly increasing. The operator $\|(\cdot)\|$ denotes the 2-norm and $\lambda_{\min}\{(\cdot)\},\,\lambda_{\max}\{(\cdot)\}$ denote the minimum and maximum eigenvalues of a matrix, respectively. The partial derivative of a continuously differentiable function $f(x_1,x_2,\dots)$ with respect to its $i^{th}$ argument $x_i$ is denoted by $\nabla_{x_i}f(x_1,x_2,\dots)\triangleq\tfrac{\partial f}{\partial {x_i}}(x_1,x_2,\dots)$. We use $\mathcal{F}\,:\,\mathbb{R}^n\rightrightarrows\mathbb{R}^n$ to indicate that $\mathcal{F}(x)$ is a set-valued mapping.

\subsection{Syntactically co-safe Linear Temporal Logic}
In this paper, we consider control specifications in the form of \emph{syntactically co-safe linear temporal logic} (scLTL) formulas:

\begin{definition}[scLTL syntax \cite{Belta_book}]\label{def:scLTL}
Syntactically co-safe LTL formulas are recursively defined over a set of observations $O$ as
\begin{equation}\label{eq:scLTL}
    \phi=\top\,|\,o\,|\,\neg o\,|\,\phi_1\wedge\phi_2\,|\,\bigcirc\phi\,|\,\phi_1U\,\phi_2,
\end{equation}
where $o\in O$ is an observation, $\top$ is the Boolean constant True, $\neg$ and $\wedge$ stand for Boolean  negation and conjunction, $\bigcirc$ and $U$ are the ``next" and ``until" temporal operators, and $\phi$, $\phi_1$, and $\phi_2$ are scLTL formulas.
\end{definition}

Other temporal operators, such as the eventually operator $\Diamond$, can be derived using \eqref{eq:scLTL} (see \cite[Ch. 2]{Belta_book} for details). scLTL formulas can be used to specify rich behaviors for dynamical systems such as reachability and avoidance through formulas such as $\phi=(\Diamond o_1 \vee \Diamond o_2) \wedge (\neg o_3 U (o_1 \vee o_2)$, which requires a system to eventually reach $o_1$ or $o_2$ and never visit $o_3$. Traditionally, the semantics of scLTL formulas are defined over infinite words over $2^O$; however, in this paper, we adopt the approach taken in \cite{Belta_book} and consider formulas defined over infinite words over $O$. As it will become clear later, this is not restrictive, as we will use $O$ to label non-overlapping regions in the state space of a dynamical system, and at most one observation can be satisfied at any time. Although the semantics of scLTL formulas are interpreted over infinite words, their satisfaction can be decided over finite prefixes of such words. Specifically, any infinite word that satisfies an scLTL formula $\phi$ defined over $O$ also contains a finite prefix such that all infinite words that contain this prefix satisfy $\phi$. The language of $\phi$, denoted by $\mathcal{L}_{\phi}$, is the set of all finite prefixes of infinite words that satisfy $\phi$. For any scLTL formula $\phi$ over $O$, there exists a finite state automaton~\footnote{In this paper, all FSAs are assumed to be deterministic. Note that any non-deterministic FSA can be easily determinized.} $\mathcal{A}$ (see Def. \ref{def:FSA} below) that accepts $\mathcal{L}_{\phi}$ \cite{scLTL}, which can be constructed using tools such as \emph{scheck2} \cite{scheck2}. 

\begin{definition}[Finite state automaton \cite{Belta_book}]\label{def:FSA}
A finite state automaton (FSA) is a tuple $\mathcal{A}=(S,s_0,O,p,S_f)$ where $S$ is a finite set of states, $s_0\in S$ is the initial state, $O$ is the input alphabet, $p\,:\,S\times O\rightarrow S$ is a transition function, and $S_f\subseteq S$ is the set of accepting states. 
\end{definition}

The semantics of an FSA are defined over finite words over $O$. An input word $w_O=w_O(1)w_O(2),\dots,w_O(N)$, $w_O(i)\in O$, $i=1\ldots, N$ produces a run $w_S=w_S(1)w_S(2),\dots w_S(N+1)$, $w_S(i)\in S$, $i=1\ldots, N+1$, such that $w_S(1)=s_0$ and $p(w_S(i),w_O(i))=w_S(i+1)$ for all $i=1,2,\dots N$. We say that a word $w_O$ is \emph{accepted} by the FSA if and only if the run $w_S$ produced by $w_O$ ends in an accepting state ($w_S(N+1)\in S_f)$. Throughout this paper, we assume that a scLTL formula is feasible, in the sense that, in the corresponding FSA, there exists a path from the initial state $s_0$ to the set of accepting states $S_f$ \cite[Assumption 1 ]{DimosACC18}.

\subsection{Hybrid Dynamical Systems}\label{sec:hybrid}

Following \cite{Teel_book}, a hybrid system with state $x\in\mathbb{R}^n$ is a tuple $\mathcal{H}=(\mathcal{F},\mathcal{G},\mathcal{C},\mathcal{D})$, where $\mathcal{F}\,:\,\mathbb{R}^n\rightrightarrows\mathbb{R}^n$ is the flow map, $\mathcal{G}\,:\,\mathbb{R}^n\rightrightarrows\mathbb{R}^n$ is the jump map, $\mathcal{C}\subset\mathbb{R}^n$ is the flow set, and $\mathcal{D}\subset\mathbb{R}^n$ is the jump set. The semantics of $\mathcal{H}$ are given by
\begin{subequations}\label{eq:Hybrid}
\begin{empheq}[left={\mathcal{H}:\empheqlbrace}]{align}
    \dot{x}\in\mathcal{F}(x), &\quad x\in\mathcal{C}, \label{eq:flow}\\
    x^+\in\mathcal{G}(x), &\quad x\in\mathcal{D}.\label{eq:jump}
\end{empheq}
\end{subequations}
In essence, when the state $x$ belongs to the flow set $\mathcal{C}$, it continuously evolves according to the differential inclusion in \eqref{eq:flow} and when it belongs to the jump set $\mathcal{D}$ it changes, discretely, according to the difference inclusion in \eqref{eq:jump}. A \emph{hybrid time domain}, denoted by $E\subset\mathbb{R}_{\geq0}\times\mathbb{N}$, is the union of a (possibly infinite) sequence of intervals of the form $[t_j,t_{j+1}]\times\{j\}$ \cite[Def. 2.3]{Teel_book}. A \emph{hybrid arc} is a function $\xi\,:\,E\rightarrow\mathbb{R}^n$ such that for each $j\in\mathbb{N}$, $t\mapsto\xi(t,j)$ is locally absolutely continuous on the interval $\mathcal{I}^j\triangleq\{t\,:\,(t,j)\in E\}$ \cite[Def. 2.4]{Teel_book}. If a hybrid arc $\xi(t,j)$ satisfies additional conditions outlined in \cite[Def. 2.6]{Teel_book} then $\xi$ is said to be a \emph{solution} to \eqref{eq:Hybrid}. A solution to \eqref{eq:Hybrid} is said to be \emph{maximal}, denoted by $\xi\in\mathcal{S}_{\mathcal{H}}$, if it cannot be extended \cite[Def. 2.7]{Teel_book}. A complete exposition on the class of hybrid systems defined in \eqref{eq:Hybrid} is beyond the scope of this paper and we refer the unfamiliar reader to \cite[Ch. 2]{Teel_book} for an excellent introduction. Similar to \cite{DimosACC18,DimosTAC20}, in this paper, we leverage this hybrid modeling framework to combine the continuous dynamics of the plants under consideration with the discrete dynamics of the FSA corresponding to a given scLTL formula.

\begin{definition}[Eventuality \cite{DimosACC18}]\label{def:eventual}
Consider a hybrid system as in \eqref{eq:Hybrid} and a closed set $\mathcal{R}\subset\mathcal{C}\cup\mathcal{D}$. The eventuality property for $\mathcal{H}$ with respect to $\mathcal{R}$ holds if for each maximal solution $\xi\in\mathcal{S}_{\mathcal{H}}$ there exists some finite time $T\geq0$ and finite number of jumps $J\geq0$ such that:
\begin{enumerate}
    \item For $t+j<T+J$ we have $\xi(t,j)\in\mathcal{C}\cup\mathcal{D}$,
    \item For $t=T$ and $j=J$ we have $\xi(t,j)\in\mathcal{R}$.
\end{enumerate} 
\end{definition}

Intuitively, the eventuality property for a hybrid system $\mathcal{H}$ with respect to a set $\mathcal{R}$ holds if the solutions to $\mathcal{H}$ reach $\mathcal{R}$ in finite time and remain within the flow and jumps sets for all times beforehand. We will use this property in conjunction with the acceptance condition of a FSA to enforce a temporal logic specification. 
As shown in \cite{DimosACC18}, the eventuality property can be certified by finding a \emph{hybrid barrier certificate}, defined as follows:

\begin{definition}[Hybrid barrier certificate \cite{DimosACC18}]\label{def:barrier}
Let $\mathcal{R}\subset\mathcal{C}\cup\mathcal{D}$ be a closed set satisfying $\mathcal{G}(\text{cl}(\mathcal{D}\backslash\mathcal{R}))\subset\mathcal{C}\cup\mathcal{D}$. A function $V_B$ is a \emph{hybrid barrier certificate} for a hybrid system $\mathcal{H}$ with respect to $\mathcal{R}$ if it is continuous on $\text{cl}(\mathcal{C}\backslash\mathcal{R})\cup\text{cl}(\mathcal{D}\backslash\mathcal{R})$, differentiable on an open neighborhood of $\text{cl}(\mathcal{C}\backslash\mathcal{R})$, and there exists a positive constant $\epsilon_b\in\mathbb{R}_{>0}$ such that the following conditions hold:
\begin{enumerate}
    \item $V_B$ is bounded from below on $\text{cl}(\mathcal{C}\backslash\mathcal{R})\cup\text{cl}(\mathcal{D}\backslash\mathcal{R})$,
    \item $V_B$ is radially unbounded on $\text{cl}(\mathcal{C}\backslash\mathcal{R})\cup\text{cl}(\mathcal{D}\backslash\mathcal{R})$,
    \item $\nabla_x V_B(x)F(x) <-\epsilon_b$ for all $x\in\text{cl}(\mathcal{C}\backslash\mathcal{R})$ and $F\in\mathcal{F}(x)$,
    \item $V_B(G(x))-V_B(x)<-\epsilon_b$ for all $x\in\text{cl}(\mathcal{D}\backslash\mathcal{R})$ and $G\in\mathcal{G}(x)$.
\end{enumerate}
\end{definition}

The following theorem shows that the existence of a hybrid barrier certificate for $\mathcal{H}$ enforces the eventuality property:

\begin{theorem}[\cite{DimosACC18}]\label{thm:barrier}
Consider a hybrid system \eqref{eq:Hybrid} that satisfies the hybrid basic conditions\footnote{The hybrid basic conditions in \cite[Assumption 6.5]{Teel_book} are mild regularity assumptions that ensure $\mathcal{H}$ is well-posed. If $\mathcal{F}$ and $\mathcal{G}$ are single-valued mappings (i.e., differential and difference equations of the form $\dot{x}=F(x)$ and $x^+=G(x)$) then these conditions reduce to $F\,:\,\mathbb{R}^n\rightarrow\mathbb{R}^n$ and $G\,:\,\mathbb{R}^n\rightarrow\mathbb{R}^n$ being continuous on $\mathcal{C}$ and $\mathcal{D}$, respectively.} \cite[Assumption 6.5]{Teel_book} and let $\mathcal{R}\subset\mathcal{C}\cup\mathcal{D}$ be a closed set that satisfies the properties detailed in Definition \ref{def:barrier}. If there exists a hybrid barrier certificate for $\mathcal{H}$ with respect to $\mathcal{R}$, then the eventuality property holds for $\mathcal{H}$ with respect to $\mathcal{R}$.
\end{theorem}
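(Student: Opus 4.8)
The plan is to use $V_B$ as a Lyapunov-type certificate that is forced to strictly decrease --- at rate at least $\epsilon_b$ in ordinary time along flows and by at least $\epsilon_b$ at each jump --- until a solution enters $\mathcal{R}$, and to combine this with the lower bound and radial unboundedness of $V_B$ to conclude that $\mathcal{R}$ is reached in finite hybrid time. By Definition~\ref{def:eventual}, it suffices to show that every maximal solution $\xi\in\mathcal{S}_{\mathcal{H}}$ satisfies $\xi(T,J)\in\mathcal{R}$ for some $(T,J)$ in its hybrid time domain and that $\xi(t,j)\in\mathcal{C}\cup\mathcal{D}$ for all preceding $(t,j)$; the latter follows from the standing assumption $\mathcal{G}(\text{cl}(\mathcal{D}\setminus\mathcal{R}))\subset\mathcal{C}\cup\mathcal{D}$, since before $\mathcal{R}$ is reached every jump originates in $\mathcal{D}\setminus\mathcal{R}$ and hence lands back in $\mathcal{C}\cup\mathcal{D}$. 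I would therefore argue by contradiction: suppose there is a maximal solution $\xi$ with $\xi(t,j)\in\text{cl}(\mathcal{C}\setminus\mathcal{R})\cup\text{cl}(\mathcal{D}\setminus\mathcal{R})$ for all $(t,j)$ in its domain $E$.

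First I would establish the decrease estimate. Writing $E=\bigcup_j([t_j,t_{j+1}]\times\{j\})$, on each flow interval the map $t\mapsto\xi(t,j)$ is locally absolutely continuous with $\dot\xi(t,j)\in\mathcal{F}(\xi(t,j))$ and $\xi(t,j)\in\text{cl}(\mathcal{C}\setminus\mathcal{R})$; since $V_B$ is differentiable on a neighborhood of this set, the chain rule together with condition~3 of Definition~\ref{def:barrier} gives $\frac{d}{dt}V_B(\xi(t,j))<-\epsilon_b$ for almost every $t$, hence $V_B(\xi(t_{j+1},j))-V_B(\xi(t_j,j))\le-\epsilon_b(t_{j+1}-t_j)$. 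At each jump, $\xi(t_{j+1},j+1)\in\mathcal{G}(\xi(t_{j+1},j))$ with $\xi(t_{j+1},j)\in\text{cl}(\mathcal{D}\setminus\mathcal{R})$, so condition~4 yields $V_B(\xi(t_{j+1},j+1))-V_B(\xi(t_{j+1},j))<-\epsilon_b$. Telescoping these bounds over all flow intervals and jumps up to an arbitrary $(t,j)\in E$ gives
\[
V_B(\xi(t,j))\ \le\ V_B(\xi(0,0))-\epsilon_b\,(t+j).
\]

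Next I would exploit conditions~1 and~2 of Definition~\ref{def:barrier}. Boundedness from below by some $\underline{b}\in\mathbb{R}$ combined with the last display forces $t+j\le(V_B(\xi(0,0))-\underline{b})/\epsilon_b=:\tau^\star$ for every $(t,j)\in E$, so $E$ is bounded and $\xi$ is not complete. Radial unboundedness of $V_B$ implies that the sublevel set $\{x\in\text{cl}(\mathcal{C}\setminus\mathcal{R})\cup\text{cl}(\mathcal{D}\setminus\mathcal{R}):V_B(x)\le V_B(\xi(0,0))\}$ is bounded, hence compact (it is closed by continuity of $V_B$), and since $V_B$ is non-increasing along $\xi$ the whole trajectory stays in this compact set $K$; in particular $\xi$ has no finite escape time. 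Thus $\xi$ is a maximal, non-complete solution with no finite escape time that remains in a compact subset $K$ of $(\mathcal{C}\cup\mathcal{D})\setminus\mathcal{R}$. Invoking the characterization of maximal solutions for hybrid systems satisfying the hybrid basic conditions \cite{Teel_book}, the domain of $\xi$ must terminate at a point $x^\star:=\xi(T,J)\in K$ from which no nontrivial solution can flow or jump. But if $x^\star\in\mathcal{D}$ then, since $x^\star\in\mathcal{D}\setminus\mathcal{R}\subset\text{cl}(\mathcal{D}\setminus\mathcal{R})$, the assumption on $\mathcal{G}$ produces a jump into $\mathcal{C}\cup\mathcal{D}$; and if $x^\star\in\mathcal{C}\setminus\mathcal{D}$ the well-posedness content of the hybrid basic conditions produces a nontrivial flow from $x^\star$, which stays in $K$. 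In either case $\xi$ can be extended, contradicting maximality, so no such $\xi$ exists.

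The routine portion of this argument is the decrease estimate --- once the chain rule is justified on each flow piece, conditions~3 and~4 give it immediately --- and the compactness/confinement bookkeeping. The main obstacle I expect is the final step: rigorously excluding a ``dead-end'' termination in which $\xi$ gets stuck in $(\mathcal{C}\cup\mathcal{D})\setminus\mathcal{R}$ without being extendable. This is precisely what the structural hypothesis $\mathcal{G}(\text{cl}(\mathcal{D}\setminus\mathcal{R}))\subset\mathcal{C}\cup\mathcal{D}$ and the regularity/viability built into the hybrid basic conditions are designed to rule out, so closing this gap amounts to carefully invoking the general existence-and-extension theory for well-posed hybrid systems rather than performing a self-contained computation.
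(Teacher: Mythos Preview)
The paper does not prove this theorem at all: Theorem~\ref{thm:barrier} is quoted verbatim from \cite{DimosACC18} as a preliminary result, with no proof or proof sketch supplied. There is therefore nothing in the present paper to compare your argument against.

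That said, your proposal is the standard route and is essentially how the result is established in the original reference: derive the telescoping decrease estimate $V_B(\xi(t,j))\le V_B(\xi(0,0))-\epsilon_b(t+j)$ from conditions~3 and~4, use the lower bound (condition~1) to cap $t+j$, use radial unboundedness (condition~2) to confine the trajectory to a compact sublevel set and rule out finite escape, and then appeal to the classification of maximal solutions for systems satisfying the hybrid basic conditions to obtain a contradiction with non-completeness. Your identification of the delicate point is accurate: the step ``if $x^\star\in\mathcal{C}\setminus\mathcal{D}$ then a nontrivial flow exists'' is not a direct consequence of the hybrid basic conditions alone but requires the viability/existence machinery of \cite[Prop.~6.10]{Teel_book} (or its analogues), which is exactly what \cite{DimosACC18} invokes. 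With that citation in place the argument closes; without it, the final paragraph remains a sketch rather than a proof.
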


\section{Problem Formulation}
\label{sec:formulation}
Consider a continuous-time nonlinear affine control system
\begin{equation}\label{eq:dyn}
    \dot{x}=f(x)+g(x)u,\quad x_0=x(0),
\end{equation}
where $x(t)\in\mathbb{R}^n$ is the system state, $u(t)\in\mathbb{R}^m$ is the control input, $f\,:\,\mathbb{R}^n\rightarrow\mathbb{R}^n$ models the \emph{unknown} system drift, and $g\,:\,\mathbb{R}^n\rightarrow\mathbb{R}^{n\times m}$ captures the \emph{known} control directions. The functions $f$ and $g$ are assumed to be locally Lipschitz with $f(0)=0$ and $0<\|g\|\leq\bar{g}$ where $\bar{g}\in\mathbb{R}_{>0}$. Let $O$ be a finite set of observations where each observation $o\in O$ corresponds to a \emph{region of interest} (ROI) of the form
\begin{equation}\label{eqn:ROI}
    \mathcal{R}_{o}\triangleq\{x\in\mathbb{R}^n\,:\,h_o(x)\geq0 \},
\end{equation}
for all $o\in O$, where $h_o\,:\,\mathbb{R}^n\rightarrow\mathbb{R}$ is a continuously differentiable function.

\begin{assumption}\label{assumption:ROI}
Each ROI $\mathcal{R}_o$ is non-empty and has no isolated points, i.e., $\text{Int}(\mathcal{R}_o)\neq\emptyset$ and $\text{cl}(\text{Int}(\mathcal{R}_o))=\mathcal{R}_o$ for all $o\in O$. Additionally, all ROI are disjoint in the sense that $\mathcal{R}_{o}\cap\mathcal{R}_{o'}=\emptyset,\,\forall o\neq o'$.
\end{assumption}

In this paper, we consider the following problem:
\begin{problem}\label{prob:informal}
Given a dynamical system \eqref{eq:dyn} and an scLTL formula $\phi$ \eqref{eq:scLTL} defined over a finite set of observations $O$ labeling ROIs as in \eqref{eqn:ROI}, find a control policy $u$ such that the trajectory of \eqref{eq:dyn} satisfies $\phi$.
\end{problem}

To address Problem \ref{prob:informal}, we first translate the scLTL formula $\phi$ into an FSA $\mathcal{A}$ that accepts the language satisfying $\phi$. We then compose the discrete behavior of $\mathcal{A}$ with the continuous behavior of \eqref{eq:dyn} to form a hybrid system $\mathcal{H}$ as in \eqref{eq:Hybrid}. Given this formulation, Problem \ref{prob:informal} becomes equivalent to certifying the eventuality property for $\mathcal{H}$ with respect to the set of accepting FSA states $S_f$. To synthesize a correct control policy, we construct a sequence of optimal control problems (OCPs) corresponding to the sequence of FSA states in the accepting run $w_S$ with the goal of driving the continuous trajectory of the plant through the ROIs so as to generate the accepting input word $w_O$. As a main contribution of this paper, we show that the sequence of optimal value functions obtained from solving these OCPs in combination with a distance to acceptance function defined over the FSA forms a hybrid barrier certificate to establish correctness with respect to the given scLTL formula. This formulation is related to those in \cite{TopcuCDC16,TopcuHSCC17,VamvoudakisACC20}, in the sense that we translate an scLTL formula into a corresponding FSA and then use the FSA's accepting run to construct a sequence of OCPs. The computational demands of solving these OCPs, as well as the uncertainty associated with the system drift $f$, motivates us in Sec. \ref{sec:RL} to introduce an approximate solution approach based on ADP and RL that does not require full knowledge of the system dynamics.

\section{Hybrid Systems Formulation}\label{sec:DTA}
To formalize Problem \ref{prob:informal} we turn to the setting of hybrid systems as detailed in Sec. \ref{sec:hybrid}. To this end, consider the FSA $\mathcal{A}=(S,\,s_0,\,O,\,p,\,S_f)$ corresponding to a scLTL formula $\phi$ over $O$ as a discrete-time dynamical system of the form
\begin{equation}\label{eq:p}
    s(j+1)=p(s(j),o(j)),
\end{equation}
where $j\in\mathbb{N}$, $s(j)\in S$ is the FSA state, and $o(j)\in O$  acts as an input to \eqref{eq:p}. Following the approach in \cite{DimosACC18}, we then combine the continuous-time dynamics in \eqref{eq:dyn} with the discrete-time dynamics in \eqref{eq:p} to obtain a hybrid system with state $x_{\mathcal{H}}\triangleq[x^T,\,s,\,o]^T$ of the form

\begin{subequations}\label{eq:H_nom}
    \begin{equation}
        \begin{bmatrix}
        \dot{x}(t,j)\\\dot{s}(t,j)\\\dot{o}(t,j)
        \end{bmatrix}
        =
        \begin{bmatrix}
        f(x(t,j))+g(x(t,j))u\\0\\0
        \end{bmatrix},\quad
        (x,s,o)\in\mathcal{C},
    \end{equation}
    \begin{equation}
        \begin{bmatrix}
        x(t,j+1)\\s(t,j+1)\\o(t,j+1)
        \end{bmatrix}
        =
        \begin{bmatrix}
        x(t,j)\\p(s(t,j),o(t,j))\\v(t,j)
        \end{bmatrix},\quad
        (x,s,o)\in\mathcal{D}.
    \end{equation}
\end{subequations}
Here, $x(t,j)$ represents the current state of the plant, $s(t,j)$ is the current state of the FSA, $o(t,j)$ is the next observation that should appear in the accepting word of the FSA, and $v(t,j)$ is a policy to be defined that selects the subsequent observation to be generated. Before explicitly defining $\mathcal{C},\,\mathcal{D}$ we introduce the set $O_s\subseteq O$, which is indexed by $s$ and is defined as
\begin{equation}\label{eq:Os}
    O_s\triangleq\{o\in O\,:\,p(s,o)\neq\emptyset\},\quad\forall s\in S.
\end{equation}
In essence, $O_s$ is the set of all \emph{admissible} observations for a given $s\in S$, in the sense that, when $o\in O_s$ is input to the FSA in state $s$ there exists an outgoing transition\footnote{Since all FSAs are assumed to be deterministic there exists at most one outgoing transition from $s$ for a given $o\in O_s$.} from $s$ to $p(s,o)$. Using \eqref{eq:Os} we define $\mathcal{C}$ and $\mathcal{D}$ for \eqref{eq:H_nom} as
\begin{subequations}\label{eq:CD_nom}
    \begin{equation}\label{eq:C_nom}
        \mathcal{C}\triangleq\{(x,s,o)\,:\,x\in\text{cl}(\mathbb{R}^n\backslash\mathcal{R}_o),\,s\in S,\,o\in O_s\},
    \end{equation}
    \begin{equation}\label{eq:D_nom}
        \mathcal{D}\triangleq\{(x,s,o)\,:\,x\in\mathcal{R}_o,\,s\in S,\,o\in O_s\},
    \end{equation}
\end{subequations}
which implies that for some $o\in O$ jumps can only occur if $x\in\mathcal{R}_o$; however, $x$ can flow through $\mathcal{R}_{o'}$ for some $o'\neq o$ without inducing a jump or generating observation $o'$. An exception to this is when the scLTL formula specifies that a certain observation $o'$ is \emph{not} to be generated at a given $s\in S$. In this situation we can impose active avoidance of $\mathcal{R}_{o'}$ by constructing a control barrier function \cite{AmesTAC17} for $\mathcal{R}_{o'}$
which is then incorporated into the subsequent control development outlined in Sec. \ref{sec:opt_ctr}. Given the above discussion, we now consider the more formal problem statement:
\begin{problem}\label{prob:main}
Given the hybrid system in \eqref{eq:H_nom}, find a set of control policies $(u,v)$ such that the eventuality property for \eqref{eq:H_nom} holds with respect to
\begin{equation}\label{eq:Rf}
    \mathcal{R}_f\triangleq\{(x,s,o)\,:\,x\in\mathbb{R}^n,\,s\in S_f,\,o\in O_s \}.
\end{equation}
\end{problem}

Our first step to solving Problem \ref{prob:main} is to specify the discrete-time policy $v$, which is designed to select subsequent observations that produce an accepting word of the FSA. To this end, consider the FSA $\mathcal{A}$ as a digraph where each state $s\in S$ represents a node and each observation $o\in O_s$ labels an edge from $s$ to $p(s,o)$. The policy $v$ is then selected so as to minimize the \emph{distance to acceptance} (DTA) metric \cite{BeltaAutomatica14,SerlinIROS18}, which, for some $s\in S$, represents the minimum number of transitions required to reach the set of accepting states $S_f$. Before defining the DTA, we define the distance between any two states in the FSA as \cite{SerlinIROS18}
\begin{equation}
    d(s,s')\triangleq\left\{
    \begin{aligned}
        \min_{\varphi\in\Upsilon(s,s')}\mathcal{N}(\varphi),\quad & \text{if }\Upsilon(s,s')\neq\emptyset,\\
        \infty,\quad & \text{if }\Upsilon(s,s')=\emptyset,
    \end{aligned}
    \right.
\end{equation}
where $\Upsilon$ is the set of states between $s$ and $s'$, $\varphi$ is a single path through $\Upsilon$, and $\mathcal{N}(\varphi)$ is the number of states in $\varphi$. The DTA can then be computed as 
\begin{equation}\label{eq:DTA}
    V_d(s)\triangleq\min_{s_f\in S_f}d(s,s_f).
\end{equation}
Note that $V_d$ is positive definite with respect to $S_f$ (i.e., $V_d(s)>0,\,\forall s\notin S_f$ and $V_d(s)=0,\,\forall s\in S_f$) and radially unbounded in the sense that $V_d(s)\rightarrow\infty$ as $d(s,s_f)\rightarrow\infty,\,\forall s_f\in S_f$. Using the DTA, the discrete-time control policy $v$ is selected as any element of $O_{p(s,o)}$ such that the DTA strictly decreases across jumps as \cite{DimosACC18}
\begin{multline}\label{eq:v}
    v(t,j)\in O_{p(s,o)}^d\triangleq\{v'\in O_{p(s,o)}\,:\\ \,V_d(p(p(s,o),v'))<V_d(p(s,o))\,\text{if }p(s,o)\notin S_f \},
\end{multline}
where $V_d(p(s,o))$ is the DTA resulting from transitioning from the current state $s$ to $p(s,o)$ by inputting observation $o\in O_s$ and $V_d(p(p(s,o),v'))$ is the DTA resulting from transitioning from $p(s,o)$ to $p(p(s,o),v')$ by inputting observation $v'\in O_{p(s,o)}$. In the context of this paper, the DTA can be shown to satisfy the properties of a barrier certificate for a hybrid system consisting of the purely discrete behavior of the FSA with respect to $S_f$ \cite[Lemma 3]{DimosTAC20}. Importantly, this implies that the sequence of observations selected by the policy in \eqref{eq:v} produces an accepting word $w_O$ for a given scLTL formula.

\section{Reach-Avoid Optimal Control}\label{sec:opt_ctr}
We now transition to the problem of synthesizing a continuous control strategy $u$ for the hybrid system such that the continuous trajectory of the plant $x(t,j)$ produces an accepting word for the FSA. In this section, we show how this can be accomplished by associating with each $j\in\mathbb{N}$ a suitably constructed optimal control problem. Importantly, we show that the \emph{value function} of each optimal control problem can be combined with the DTA from the previous section to construct a barrier certificate for the hybrid system in \eqref{eq:H_nom} with respect to the set in \eqref{eq:Rf}.

\subsection{System Transformation}
Consider the current state $x_{\mathcal{H}}(t,j)$ of the hybrid system in \eqref{eq:H_nom} and recall that $o(t,j)$ represents the observation that \emph{should} be generated next so as to satisfy the given scLTL formula. For ease of exposition, we drop the explicit dependence on $t,\,j$ unless otherwise needed for clarity. To ensure the state of the plant $x$ can be regulated to the ROI $\mathcal{R}_{o}$ corresponding to $o$, we perform a system transformation so that $\mathcal{R}_{o}$ contains an equilibrium point for \eqref{eq:dyn}. To this end, consider some point $x_d\in\text{Int}(\mathcal{R}_{o})$ and define the regulation error $e\triangleq x-x_d$. Additionally, define the invertible mapping $z\,:\,\mathbb{R}^n\rightarrow\mathbb{R}^n$ as $z(e)\triangleq e+x_d$, where $z^{-1}(x)=x-x_d$. To ensure the following system transformation can be performed we make the following assumption:

\begin{assumption}[\cite{RushiTNNLS16}]\label{assumption:g}
The input dynamics matrix $g(x)$ has full column rank and the function $g^{\dag}(x)\triangleq(g(x)^Tg(x))^{-1}g(x)^T$ is bounded and locally Lipschitz.
\end{assumption}

Following the formulation in \cite{RushiTNNLS16}, we decompose the control policy into two terms: a feedforward term $u_d\in\mathbb{R}^m$ that ensures $x_d$ is an equilibrium point for \eqref{eq:dyn} and a feedback term $\mu\in\mathbb{R}^m$ to regulate $x$ to $x_d$, as $u=u_d+\mu$. The steady-state control policy corresponding to $x_d$ is then $u_d=-g^{\dag}(x_d)f(x_d)$ \cite{RushiTNNLS16}, which is a constant value for each $o\in O$. With this steady-state policy, we define an auxiliary dynamical system as
\begin{equation}\label{eq:alt_dyn}
    \dot{e}=\dot{x}=\underbrace{f(z(e))+g(z(e))u_d}_{F(e)} + \underbrace{g(z(e))}_{G(e)}\mu,
\end{equation}
where $F(0)=0$ and $F(e)$ is locally Lipschitz. 

\subsection{Constructing the Cost Function}
Given the system transformation in the previous subsection, the present goal is to determine a feedback control policy $\mu\,:\,\mathbb{R}^n\rightarrow\mathbb{R}^m$ for the auxiliary system \eqref{eq:alt_dyn} such that $e\rightarrow0\implies x\rightarrow\mathcal{R}_o$. Simultaneously, for a given FSA state $s$, this policy must ensure that any negated observations $\neg o$ are \emph{not} generated (i.e., certain $\mathcal{R}_o$ must be avoided while in state $s$). To address this challenge, we associate with each state of the FSA a reach-avoid optimal control problem, where the objective is to find an optimal feedback policy $\mu^*\,:\,\mathbb{R}^n\rightarrow\mathbb{R}^m$ that minimizes the following cost functional
\begin{equation}\label{eq:J}
    V(e,\mu)\triangleq\int_{t}^{\infty}r(e(\tau),\mu(\tau))d\tau,
\end{equation}
where $r\,:\,\mathbb{R}^n\times\mathbb{R}^m\rightarrow\mathbb{R}_{\geq0}$ denotes the instantaneous positive definite cost. To facilitate this approach, let $\bar{O}_s\subset O$ denote the set of all observations that are negated at the FSA state $s$. Now, for each $o\in \bar{O}_s$ consider the set $\mathcal{X}_{o}\triangleq\{x\in\mathbb{R}^n\,:\,h_o(x)\leq0 \}$ and associate with each $\mathcal{X}_o$ a control barrier function\footnote{See \cite{AmesTAC17} for the conditions that $b_o$ must satisfy to be a valid certificate for safety and \cite{AmesL4DC20} for the assumptions required to use $b_o$ in the context of uncertain systems.} \cite{AmesTAC17} of the form $b_o(x)=-1/h_{o}(x)$, which satisfies $\inf_{x\in\mathcal{X}_o}b_o(x)\geq0$ and $\lim_{x\rightarrow\partial\mathcal{X}_o}b(x)=\infty$. To ensure the optimal control problem is well-posed\footnote{Since this optimal control problem is defined over an infinite-horizon we must have $r(0,0)=0$ to ensure a finite total cost.} we then encode each $b_o,\,\forall o\in\bar{O}_s$ within a positive semi-definite recentered barrier function  (RBF) \cite{WillisAutomatica04,PanagouTAC16} as
\begin{equation}\label{eq:B}
    B(e)\triangleq\sum_{o\in\bar{O}_s}\left(b_o(z(e))-b_o(x_d)-\nabla b_o(x_d)e \right)^2,
\end{equation}
which satisfies $B(0)=0$, $B(e)>0,\,\forall e\neq 0$, and $B(e)\rightarrow\infty$ as $z(e)\rightarrow\partial\mathcal{X}_o$ for \emph{any} $o\in\bar{O}_s$. The instantaneous cost from \eqref{eq:J} is then selected as
\begin{equation}\label{eq:r}
    r(e,\mu)=Q(e) + \mu^TR\mu + B(e),
\end{equation}
where $Q\,:\,\mathbb{R}^n\rightarrow\mathbb{R}_{\geq0}$ is a positive definite and radially unbounded function, $R\in\mathbb{R}^{m\times m}$ is a symmetric positive definite matrix, and $B(e)$ is from \eqref{eq:B}. The optimal value function (cost-to-go) $V^*\,:\,\mathbb{R}^n\rightarrow\mathbb{R}_{\geq0}$ is defined as
\begin{equation}\label{eq:value}
    V^*(e)\triangleq\inf_{\mu(\tau)\,|\,\tau\in\mathbb{R}_{\geq t}}\int_{t}^\infty r(e(\tau),\mu(\tau))d\tau,
\end{equation}
and is assumed to be continuously differentiable\footnote{Although the gradient of the value function is clearly discontinuous on the boundaries of any negated ROIs, the subsequent analysis illustrates that the system trajectory remains outside of such regions for all time, implying the value function remains continuously differentiable along the resulting system trajectory (cf. \cite{PatTRO20}).}  on an open neighborhood of $\text{cl}(\mathcal{C}\backslash\mathcal{R}_f)$. Assuming that an optimal feedback policy $\mu^*\,:\,\mathbb{R}^n\rightarrow\mathbb{R}^m$ exists, the value function is characterized by the Hamilton-Jacobi-Bellman (HJB) equation
\begin{equation}\label{eq:HJB}
    0=\min_{\mu}H(e,\nabla_e V^*,\mu)=\nabla_e V^*(e)(F(e)+G(e)\mu^*)+r(e,\mu^*),
\end{equation}
with a boundary condition of $V^*(0)=0$, where $H\,:\,\mathbb{R}^n\times\mathbb{R}^{1\times n}\times\mathbb{R}^m\rightarrow\mathbb{R}$ is the Hamiltonian. The optimal policy $\mu^*(e)$ is obtained from the Hamiltonian minimization condition $\partial H/\partial\mu^*=0$ as 
\begin{equation}\label{eq:mu*}
    \mu^*(e)=-\frac{1}{2}R^{-1}G(e)^T\nabla_e V^*(e)^T.
\end{equation}

\subsection{Hybrid Barrier Certificate}
The following proposition shows that for each $j\in\mathbb{N}$, the optimal policy in \eqref{eq:mu*} regulates $x$ to $x_d$ while avoiding any ROI that should not be visited.

\begin{proposition}\label{prop:GUAS}
If Assumptions \ref{assumption:ROI}-\ref{assumption:g} hold and there exists a positive definite and continuously differentiable function $V^*$ satisfying \eqref{eq:HJB}, then the control policy in \eqref{eq:mu*} renders the origin asymptotically stable for the auxiliary dynamical system \eqref{eq:alt_dyn}. Furthermore, the set $\text{Int}(\mathcal{X}_o)$ is forward invariant.
\end{proposition}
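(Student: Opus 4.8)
The plan is to use the optimal value function $V^*$ as a common Lyapunov and barrier function for the closed-loop auxiliary system \eqref{eq:alt_dyn} with $\mu=\mu^*$, which is well posed under Assumptions \ref{assumption:ROI}--\ref{assumption:g} (in particular $F(0)=0$, and $x_d\in\text{Int}(\mathcal{X}_o)$ for each $o\in\bar{O}_s$ because the ROIs are disjoint). First I would observe that the origin is an equilibrium of this closed loop: since $V^*$ is positive definite, continuously differentiable, and attains its minimum at $e=0$, we have $\nabla_e V^*(0)=0$, hence $\mu^*(0)=0$ by \eqref{eq:mu*}, and together with $F(0)=0$ this gives $F(0)+G(0)\mu^*(0)=0$. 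Differentiating $V^*$ along closed-loop solutions and inserting the HJB equation \eqref{eq:HJB} evaluated at $\mu^*$ gives
\[
\dot V^*(e)=\nabla_e V^*(e)\big(F(e)+G(e)\mu^*(e)\big)=-r(e,\mu^*(e))=-Q(e)-\mu^*(e)^T R\,\mu^*(e)-B(e).
\]
Since $Q$ is positive definite, $R$ is positive definite, and $B(e)\ge 0$ with $B(0)=0$, it follows that $\dot V^*(e)\le -Q(e)<0$ for all $e\ne 0$ and $\dot V^*(0)=0$, so Lyapunov's direct method yields asymptotic stability of the origin for \eqref{eq:alt_dyn} (and global asymptotic stability if $V^*$ is in addition radially unbounded). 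Because the transformation $z$ is invertible with $z(0)=x_d\in\text{Int}(\mathcal{R}_o)$, $e\to 0$ implies $x\to x_d$, hence $x\to\mathcal{R}_o$.

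For forward invariance of $\text{Int}(\mathcal{X}_o)$, the crucial fact is that the recentered barrier term in the running cost \eqref{eq:r} satisfies $B(e)\to\infty$ as $z(e)\to\partial\mathcal{X}_o$, and that this growth is \emph{inherited} by the value function, i.e., $V^*(e)\to\infty$ as $z(e)\to\partial\mathcal{X}_o$ for every negated observation $o\in\bar{O}_s$. Granting this, take any initial condition $e_0$ with $z(e_0)\in\text{Int}(\mathcal{X}_o)$, so that $V^*(e_0)<\infty$. Because $\dot V^*(e)\le -Q(e)\le 0$ along the closed loop, the sublevel set $\Omega_{e_0}\triangleq\{e:V^*(e)\le V^*(e_0)\}$ is forward invariant, and by the blow-up of $V^*$ near $\partial\mathcal{X}_o$ it is bounded away from $\partial\mathcal{X}_o$, hence contained in $\{e:z(e)\in\text{Int}(\mathcal{X}_o)\}$. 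Therefore $z(e(t))\in\text{Int}(\mathcal{X}_o)$ for all $t$ in the domain of the solution, which in turn shows a posteriori that the trajectory stays in the region on which $V^*$ is continuously differentiable, removing the mild circularity in the Lyapunov computation above.

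I expect the main obstacle to be justifying that $V^*$ inherits the blow-up of $B$ near $\partial\mathcal{X}_o$: a priori the optimal policy might steer away from $\partial\mathcal{X}_o$ aggressively, and since $\mu^*$ depends on $\nabla_e V^*$ --- precisely where $V^*$ loses smoothness --- one has to rule out that the cost accrued before escaping a neighborhood of $\partial\mathcal{X}_o$ stays bounded. Here the tradeoff encoded in $r=Q+\mu^T R\mu+B$ is what saves the argument (leaving the boundary quickly costs control effort through $\mu^T R\mu$, while lingering near it costs through $B$), and this is a known property of recentered-barrier-function formulations; I would either invoke those results directly or give a short self-contained argument to this effect. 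Everything else in the proof is a routine consequence of the HJB identity together with the positive definiteness of $Q$, $R$, and $B$.
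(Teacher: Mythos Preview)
Your proposal is correct and follows essentially the same approach as the paper: the paper's proof (which is only sketched and defers to \cite{CohenCDC20}) also takes $V^*$ as a Lyapunov function, uses the HJB identity to obtain $\dot V^*=-r(e,\mu^*)\le -\psi(\|e\|)$ for a class $\mathcal{K}$ function $\psi$, and infers forward invariance of $\text{Int}(\mathcal{X}_o)$ from the boundedness of $V^*$ along trajectories together with the barrier blow-up. Your honest flag about whether $V^*$ inherits the blow-up of $B$ near $\partial\mathcal{X}_o$ is exactly the step the paper outsources to the recentered-barrier literature (e.g., \cite{WillisAutomatica04,CohenCDC20}); invoking those results, as you suggest, is the intended route.
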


\begin{proof}
The proof is based on the fact that
the derivative of $V^*$ along the trajectory of \eqref{eq:alt_dyn} can be upper bounded as $\dot{V}^*\leq - \psi(\|e\|)$, where $\psi$ is a class $\mathcal{K}$ function. It follows the lines of the proof of Prop. 1 in 
\cite{CohenCDC20}, and it is omitted. \end{proof}

The following theorem illustrates that the DTA from the previous section and the optimal value function from this section can be used to construct a barrier certificate to address Problem \ref{prob:main}.
\begin{theorem}\label{thm:barrier2}
Provided Assumptions \ref{assumption:ROI}-\ref{assumption:g} and the hypothesis of Proposition \ref{prop:GUAS} hold, for some positive constant $\lambda\in\mathbb{R}_{>0}$ the function
\begin{equation}
    V_B(z^{-1}(x),s,o)\triangleq V_d(s) + \lambda V^*(z^{-1}(x)),
\end{equation}
where $V_d$ is the DTA from \eqref{eq:DTA}, is a barrier certificate for \eqref{eq:H_nom} with respect to \eqref{eq:Rf}.
\end{theorem}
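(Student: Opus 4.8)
The plan is to verify, one by one, the properties required of a hybrid barrier certificate in Definition~\ref{def:barrier} (including the standing set condition $\mathcal{G}(\text{cl}(\mathcal{D}\backslash\mathcal{R}_f))\subset\mathcal{C}\cup\mathcal{D}$), exploiting the additive split of $V_B$ into the purely discrete part $V_d(s)$ and the purely continuous part $\lambda V^*(z^{-1}(x))$. Regularity is immediate: $V_d$ does not depend on $x$ and $S$ is discrete, while $V^*$ is, by hypothesis, positive definite and continuously differentiable on a neighborhood of $\text{cl}(\mathcal{C}\backslash\mathcal{R}_f)$; boundedness below of $V_B$ follows from $V_d\geq 0$, $V^*\geq 0$, $\lambda>0$. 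For radial unboundedness on $\text{cl}(\mathcal{C}\backslash\mathcal{R}_f)\cup\text{cl}(\mathcal{D}\backslash\mathcal{R}_f)$, note that $S$ and $O$ are finite, so any divergent sequence in this set must have $\|x\|\to\infty$, and then $V^*(x-x_d)\to\infty$ since $Q$, hence $V^*$, is radially unbounded. The set condition is checked directly: a jump maps $(x,s,o)$ to $(x,p(s,o),v)$ with $v\in O^d_{p(s,o)}\subseteq O_{p(s,o)}$ and $x\in\mathbb{R}^n$, so the image lies in $\mathcal{C}\cup\mathcal{D}=\{(x,s,o):x\in\mathbb{R}^n,\,s\in S,\,o\in O_s\}$, with nonemptiness of $O^d_{p(s,o)}$ at non-accepting states guaranteed by feasibility of $\phi$ (shortest paths to $S_f$ have DTA-decreasing first edges).

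For the flow condition (condition~3), observe that on $\mathcal{C}$ we have $x\in\text{cl}(\mathbb{R}^n\backslash\mathcal{R}_o)=\mathbb{R}^n\backslash\text{Int}(\mathcal{R}_o)$, whereas $x_d\in\text{Int}(\mathcal{R}_o)$; since $\text{Int}(\mathcal{R}_o)$ is open there is $\delta_o>0$ with $\|e\|=\|x-x_d\|\geq\delta_o$ for every $(x,s,o)\in\mathcal{C}$. Along a flow $s$ and $o$ are frozen, so the $s,o$ components of the flow map of~\eqref{eq:H_nom} vanish and, with the applied input $u=u_d+\mu^*$, the flow-map term reduces to $\nabla_x V_B\cdot(f(x)+g(x)u)=\lambda\nabla_e V^*(e)\big(F(e)+G(e)\mu^*(e)\big)=-\lambda\,r(e,\mu^*)\leq-\lambda Q(e)$, where the middle equality is the HJB equation~\eqref{eq:HJB}. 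Since $Q$ is positive definite and radially unbounded, $c_o\triangleq\inf_{\|e\|\geq\delta_o}Q(e)>0$, so this term is at most $-\lambda c_o$, and minimizing $c_o$ over the finite set $O$ gives a uniform bound.

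The jump condition (condition~4) is the crux. Writing $G(x_\mathcal{H})=(x,p(s,o),v)$ with $v\in O^d_{p(s,o)}$ and making the dependence of the value functions on the FSA state and target explicit, $V_B(G(x_\mathcal{H}))-V_B(x_\mathcal{H})=\big(V_d(p(s,o))-V_d(s)\big)+\lambda\big(V^*_{p(s,o),v}(x-x_{d,v})-V^*_{s,o}(x-x_{d,o})\big)$. The first bracket is handled by the DTA: by the selection rule~\eqref{eq:v} together with the fact that $V_d$ is a barrier certificate for the discrete FSA dynamics with respect to $S_f$ (\cite[Lemma 3]{DimosTAC20}), $V_d$ strictly decreases across every jump originating from $s\notin S_f$ — precisely the situation on $\text{cl}(\mathcal{D}\backslash\mathcal{R}_f)$ — and, being integer-valued, decreases by at least $1$. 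The free parameter $\lambda$ is then used to dominate the second bracket: since $S$ and $O$ are finite and these evaluations take place for $x$ inside the ROIs — or, per the footnote after~\eqref{eq:value}, along solutions that, by Proposition~\ref{prop:GUAS}, never reach the boundary of a negated ROI so that every $V^*$ involved stays finite and continuous — this difference admits a uniform upper bound $M$; picking $\lambda$ small enough that $1-\lambda M>0$ and then $\epsilon_b\in\big(0,\min\{1-\lambda M,\ \lambda\min_{o\in O}c_o\}\big)$ yields $V_B(G(x_\mathcal{H}))-V_B(x_\mathcal{H})<-\epsilon_b$, which together with the flow bound above establishes all four conditions.

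I expect the main obstacle to be exactly this uniform bound $M$ on the jump-induced change in $V^*$: the post-jump optimal cost $V^*_{p(s,o),v}$ is evaluated at a point $x\in\mathcal{R}_o$ that is generally on $\partial\mathcal{R}_o$, and if $o$ is a negated observation at $p(s,o)$ then the recentered barrier $B$ — hence $V^*_{p(s,o),v}$ — diverges there. Reconciling this requires restricting attention to the forward-reachable trajectories of~\eqref{eq:H_nom} (equivalently, to the compact operating region used in the RL development of Sec.~\ref{sec:RL}), on which Proposition~\ref{prop:GUAS} keeps the state strictly out of every negated ROI, so that $M$ is just the maximum of a continuous function over a compact set; the additive structure of $V_B$ and \cite[Lemma 3]{DimosTAC20} then do the rest.
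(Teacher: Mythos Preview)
Your proposal is correct and follows essentially the same route as the paper: use the HJB identity together with $x_d\in\text{Int}(\mathcal{R}_o)$ (hence $\|e\|\geq\delta_o>0$ on $\text{cl}(\mathcal{C}\backslash\mathcal{R}_f)$) to get a uniform negative flow bound, and use the integer-valued DTA dropping by at least $1$ at jumps, absorbing the change in $V^*$ by taking $\lambda$ small. You are, if anything, more careful than the paper in two places---you verify the set condition $\mathcal{G}(\text{cl}(\mathcal{D}\backslash\mathcal{R}_f))\subset\mathcal{C}\cup\mathcal{D}$ explicitly, and you flag the subtlety that the post-jump value function could blow up if the previously-targeted ROI becomes a negated region at the new FSA state (the paper's ``$V^*$ is bounded for all $j$ by Proposition~\ref{prop:GUAS}'' is implicitly a statement about forward-reachable trajectories, which is exactly the resolution you propose).
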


\begin{proof}
Recall that a function $V_B$ is a barrier certificate if it satisfies the criteria of Definition \ref{def:barrier}. The fact that the flow map of \eqref{eq:H_nom} is continuous on the flow set along with \cite[Lemma 2]{DimosTAC20} can be used to show that \eqref{eq:H_nom} satisfies the hybrid basic conditions \cite[Assumption 6.5]{Teel_book}. The continuity of $V_B$ on $\text{cl}(\mathcal{C}\backslash\mathcal{R}_f)\cup\text{cl}(\mathcal{D}\backslash\mathcal{R}_f)$ follows from the continuity of $V_d$ and $V^*$ on $\text{cl}(\mathcal{C}\backslash\mathcal{R}_f)\cup\text{cl}(\mathcal{D}\backslash\mathcal{R}_f)$. Since we've assumed $V^*$ is continuously differentiable on an open neighborhood of $\text{cl}(\mathcal{C}\backslash\mathcal{R}_f)$, then $V_B$ is continuously differentiable on an open neighborhood of $\text{cl}(\mathcal{C}\backslash\mathcal{R}_f)$. Furthermore, $V_d$ and $V^*$ are both positive definite and radially unbounded, thus $V_B$ satisfies the first two conditions of Definition \ref{def:barrier}. For $(x,s,o)\in\text{cl}(\mathcal{C}\backslash\mathcal{R}_f)$ we have
\begin{equation}
\begin{aligned}
    \begin{bmatrix}
    \frac{\partial V_B}{\partial x} & \frac{\partial V_B}{\partial s} & \frac{\partial V_B}{\partial o}
    \end{bmatrix}
    \begin{bmatrix}
    F(z^{-1}(x)) + G(z^{-1}(x))\mu^*(z^{-1}(x)) \\ 0 \\ 0
    \end{bmatrix}
    \\
    =\lambda\nabla V^*(z^{-1}(x))\Big(F(z^{-1}(x)) + G(z^{-1}(x))\mu^*(z^{-1}(x)) \Big)
    \\
    = -\lambda\Big(Q(z^{-1}(x)) + B(z^{-1}(x)) \Big)\leq -\lambda\psi(\|z^{-1}(x)\|),
    \end{aligned}
\end{equation}
where the second equality follows by substituting in the HJB equation and the subsequent inequality from noting that $Q(z^{-1}(x)) + B(z^{-1}(x))$ is positive definite and can hence be lower bounded by a class $\mathcal{K}$ function $\psi\,:\,\mathbb{R}_{\geq0}\rightarrow\mathbb{R}_{\geq0}$ \cite[Lemma 4.3]{Khalil_book}. Since $\psi(0)=0$ is only attained at $x=x_d$ and $x_d\in\text{Int}(\mathcal{R}_o)$, then by the definition of $\mathcal{C}$ in \eqref{eq:C_nom}, there exists some positive constant $\epsilon_b\in\mathbb{R}_{>0}$ such that $\lambda\psi(\|z^{-1}(x)\|)>\epsilon_b$ for all $(x,s,o)\in\text{cl}(\mathcal{C}\backslash\mathcal{R}_f)$. Hence, the third condition of Definition \ref{def:barrier} holds. For $(x,s,o)\in\text{cl}(\mathcal{D}\backslash\mathcal{R}_f)$ we have

\small
\begin{multline}
    V_B(x(t,j+1),s(t,j+1),o(t,j+1))-V_B(x(t,j),s(t,j),o(t,j))\\=(V_d(s(t,j+1))-V_d(s(t,j))) \\+ \lambda(V^*(z^{-1}(x(t,j+1))) - V^*(z^{-1}(x(t,j)))).
\end{multline}
\normalsize
Since $V_d$ takes only integer values and strictly decreases at jumps by \eqref{eq:v} we have $V_d(s(t,j+1))-V_d(s(t,j))\leq-1$. We must now distinguish between two cases: either $V^*(z^{-1}(x(t,j+1)))$ $- V^*$ $(z^{-1}(x(t,j)))$ $<1$ or $V^*(z^{-1}(x(t,j+1)))$ $- V^*(z^{-1}(x(t,j)))\geq1$. For the former, the final condition of Definition \ref{def:barrier} is trivially satisfied by letting $\lambda=1$. For the latter, pick $\lambda$ as any sufficiently small positive constant such that 
\begin{multline}\label{eq:thm_cond4}
    V_d(s(t,j+1))-V_d(s(t,j)) \\> \lambda(V^*(z^{-1}(x(t,j+1))) - V^*(z^{-1}(x(t,j)))).
\end{multline}
By Proposition \ref{prop:GUAS}, $V^*$ is bounded for all $j\in\mathbb{N}$ and the existence of such a $\lambda$ is guaranteed. Hence the fourth condition of Definition \ref{def:barrier} holds. Since $V_B$ satisfies the conditions of Definition \ref{def:barrier}, $V_B$ is a barrier certificate and Theorem \ref{thm:barrier} can be invoked to conclude that the eventuality property for \eqref{eq:H_nom} holds with respect to \eqref{eq:Rf}.
\end{proof}

\begin{remark}
Even if $V^*(z^{-1}(x(t,j+1))) - V^*(z^{-1}(x(t,j)))\gg 1$,  $\lambda$ can always be taken to be some arbitrarily small positive constant such that the inequality in \eqref{eq:thm_cond4} holds, and hence the conditions of Definition \ref{def:barrier} hold. Note that the actual value of $\lambda$ is \emph{not} required for any part of implementing the method developed in this paper.
\end{remark}

\section{Reinforcement Learning-based Solution}\label{sec:RL}
To circumvent the computational demands of solving a sequence of OCPs and to alleviate the need to have full knowledge of the system dynamics, in this section we introduce an approximate solution approach based on ADP and RL (see \cite{ADP_survey1,ADP_survey2} for surveys of these approaches). Here, we take a model-based RL approach based on the work in \cite{RushiAutomatica16,RushiAutomatica16_StaF} wherein the value functions, optimal control policies, and system drift dynamics are all simultaneously learned online. We provide a proof of convergence of this approximation scheme and develop sufficient conditions under which formal guarantees of correctness with respect to the given specification are maintained.

\subsection{System Identification}\label{sec:sys_id}
Using the universal function approximation property of neural networks (NNs), on a compact set $\chi\subset\mathbb{R}^n$ containing the origin the unknown drift dynamics $f(x)$ can be represented as $f(x)=\theta^TY(x)+\epsilon_{\theta}(x)$ where $Y\,:\,\mathbb{R}^n\rightarrow\mathbb{R}^{p_1}$ is a vector of user-defined basis function, $\theta\in\mathbb{R}^{p_1\times n}$ are the unknown ideal NN weights, and $\epsilon_{\theta}\,:\,\mathbb{R}^n\rightarrow\mathbb{R}^n$ are the unknown function reconstruction errors\footnote{Similar to works such as \cite{RushiAutomatica16_StaF,RushiTNNLS16} it is assumed that $Y$ provides a proper basis for function approximation and hence the NN satisfies $\|\theta\|\leq\bar{\theta}<\infty$, $\sup_{x\in\chi}\|\epsilon_{\theta}(x)\|\leq\bar{\epsilon}_{\theta}$, $\sup_{x\in\chi}\|\nabla\epsilon_{\theta}(x)\|\leq\bar{\epsilon}_{\theta1}$, where $\bar{\theta},\,\bar{\epsilon}_{\theta},\,\bar{\epsilon}_{\theta1}\in\mathbb{R}_{>0}$ are positive constants. See works such as \cite{LewisAJC99} for additional details.}. Given the NN representation of $f$, the unknown NN weights are replaced with estimates $\hat{\theta}\in\mathbb{R}^{p_1\times n}$ to yield the estimated dynamics $\hat{f}(x,\hat{\theta})\triangleq\hat{\theta}^TY(x)$. To update the weights so as to minimize the approximation error $\tilde{\theta}\triangleq\theta-\hat{\theta}$, we take a concurrent learning \cite{Chowdhary,ICL} approach. In this setting, input-output data is recorded in a history stack that is used to update the weights in real-time. Provided the recorded data is rich enough, convergence of the weight estimates to a neighborhood their ideal values can be established. Specifically, we leverage an integral concurrent learning (ICL) approach \cite{ICL}, which alleviates the need to measure (or compute) state derivatives, and select the update law as \cite{PatCDC18}
\begin{subequations}\label{eq:ICL_update}
    \begin{equation}
    \begin{aligned}
        \dot{\hat{\theta}}(t)=\text{proj}\Bigg\{k_{\theta}\Gamma_{\theta}(t)\sum_{i=1}^{M}\mathcal{Y}_i\big[x(t_i)-x(t_i-\Delta t_{\theta})\\-\mathcal{U}_i-\hat{\theta}(t)^T\mathcal{Y}_i\big]^T\Bigg\}
        \end{aligned}
    \end{equation}
    \begin{equation}
        \dot{\Gamma}_{\theta}(t)=\beta_{\theta}\Gamma_{\theta}(t)-k_{\theta}\Gamma_{\theta}(t)\left(\sum_{i=1}^M\mathcal{Y}_i\mathcal{Y}_i^T\right)\Gamma_{\theta}(t),
    \end{equation}
\end{subequations}
where $k_{\theta}\in\mathbb{R}_{>0}$ is a user-defined gain, $\beta_{\theta}\in\mathbb{R}_{>0}$ is a forgetting factor, and $\text{proj}\{\cdot\}$ is a smooth projection operator that bounds the weight estimates \cite[\S 4.4]{Dixon_book}. In \eqref{eq:ICL_update} $M\in\mathbb{N}$ is the size of the history stack, $t_i\in[0,t]$ is some previous time, and the terms $\mathcal{Y}_i\triangleq\mathcal{Y}(t_i)$ and $\mathcal{U}_i\triangleq\mathcal{U}(t_i)$ are defined as $\mathcal{Y}(t_i)\triangleq\int_{\max\{0,t_i-\Delta t_{\theta} \}}^{t_i}Y(x(\tau))d\tau$ and $\mathcal{U}(t_i)\triangleq\int_{\max\{0,t_i-\Delta t_{\theta} \}}^{t_i}g(x(\tau))u(\tau)d\tau$ where $\Delta_{\theta}t\in\mathbb{R}_{>0}$ is an integration window. To ensure identification of the unknown dynamics, the history stack is required to be sufficiently rich:

\begin{assumption}[\cite{PatCDC18,ICL}]\label{assumption:ICL}
There exist constants $T_{\theta},\,\lambda_{\theta}\in\mathbb{R}_{>0}$ such that for all $t\geq T_{\theta}$, $\lambda_{\min}\big\{\sum_{i=1}^M\mathcal{Y}_i\mathcal{Y}_i^T \big\}\geq\lambda_{\theta}$.
\end{assumption}

Provided the above assumption is satisfied and $\lambda_{\min}\{ \Gamma^{-1}_{\theta}(0)\}>0$, $\Gamma_{\theta}(t)$ can be shown to satisfy $\underline{\Gamma}_{\theta}I_{p_1}\leq\Gamma_{\theta}(t)\leq\overline{\Gamma}_{\theta}I_{p_1},\,\forall t\geq0$ \cite{PatCDC18}, where $\underline{\Gamma}_{\theta},\,\overline{\Gamma}_{\theta}\in\mathbb{R}_{>0}$ and $I_{p_1}$ denotes a $p_1\times p_1$ identity matrix.  Convergence of the weight estimates to a neighborhood of their ideal values can be established using $V_{\theta}(\tilde{\theta},t)\triangleq\text{tr}\big(\tilde{\theta}^T\Gamma_{\theta}(t)\tilde{\theta}\big)$ as a Lyapunov function, which can be shown to satisfy $\dot{V}_{\theta}\leq-K_{\theta}\|\tilde{\theta}\|^2+D_{\theta}\|\tilde{\theta}\|$ where $K_{\theta}\in\mathbb{R}_{>0}$ is a constant that depends on the gains and $D_{\theta}\in\mathbb{R}_{>0}$ is a residual bound that depends on $\epsilon_{\theta}$ \cite[Thm. 1]{PatCDC18}.

\subsection{Value Function Approximation}
To estimate the value function online we leverage computationally efficient state-following kernel (StaF) functions introduced in \cite{RushiAutomatica16_StaF,RushiTNNLS19}, which provide a local representation of the value function within a small compact set that travels with the state of the system. To facilitate this approach let $\chi\subset\mathbb{R}^n$ be a compact set with $e\in\text{Int}(\chi)$ and let $B_a(e)\subset\chi$ be a closed ball of radius $a\in\mathbb{R}_{>0}$ centered at $e$. Furthermore, let $\bar{B}\,:\,\mathbb{R}^n\rightarrow\mathbb{R}_{\geq0}$ be a bounded positive semi-definite function defined as $\bar{B}(y)\triangleq\frac{B(y)}{1+B(y)}$. Then, after adding and subtracting $\bar{B}(y)$, the value function can be represented at points $y\in B_a(e)$ as \cite{RushiAutomatica16_StaF,RushiTNNLS19}
\begin{equation}
    V^*(y)=W(e)^T\sigma(y,c(e)) + \epsilon(e,y) + \bar{B}(y),
\end{equation}
where $W\,:\,\chi\rightarrow\mathbb{R}^n$ is the continuously differentiable ideal weight function, $\sigma\,:\,\chi\times \chi\rightarrow\mathbb{R}^L$ is a vector of continuously differentiable kernel functions, $c(e)\triangleq[c_1(e),\dots,c_L(e)]^T$ is a set of distinct centers, and $\epsilon\,:\chi\times\chi\rightarrow\mathbb{R}$ is the continuously differentiable function reconstruction error. Since the ideal weight function is unknown, it is replaced with estimates $\hat{W}_c,\,\hat{W}_a\,:\,\mathbb{R}_{\geq0}\rightarrow\mathbb{R}^L$ yielding the estimated value function and control policy as
\begin{subequations}
    \begin{equation}\label{eq:V_hat}
    \hat{V}(y,e,\hat{W}_c)=\hat{W}_c^T\sigma(y,c(e)) + \bar{B}(y)
\end{equation}
\begin{equation}\label{eq:mu_hat}
    \hat{\mu}(y,e,\hat{W}_a)=-\frac{1}{2}R^{-1}G(y)^T(\nabla_y\sigma(y,c(e))^T\hat{W}_a + \nabla_y\bar{B}(y)^T)
\end{equation}
\end{subequations}
The objective is then to determine values for the estimated weights $\hat{W}_c(t),\,\hat{W}_a(t)$ such that $\hat{V}$ and $\hat{\mu}$ closely approximate $V^*$ and $\mu^*$. To this end, the approximations from \eqref{eq:V_hat} and \eqref{eq:mu_hat}, as well the estimated dynamics $\hat{f}=\hat{\theta}^TY(z(e))$, are substituted into the HJB equation \eqref{eq:HJB} to yield the Bellman error (BE) $\delta$ as 
\begin{multline}\label{eq:BE}
    \delta(y,e,\hat{W}_c,\hat{W}_a,\hat{\theta})\triangleq r(y,\hat{\mu}(y,e,\hat{W}_a))\\
    + \nabla_y \hat{V}(y,e,\hat{W}_c)(\hat{F}(y,\hat{\theta})+G(y)\hat{\mu}(y,e,\hat{W}_a)),
\end{multline}
where $\hat{F}(y,\hat{\theta})\triangleq\hat{\theta}^TY(z(y))-g(z(y))g^{\dag}(x_d)\hat{\theta}^TY(x_d)$. Since the HJB equation in \eqref{eq:HJB} is equal to zero under optimal conditions, the goal of the subsequent learning strategy is to update  $\hat{W}_c(t),\,\hat{W}_a(t)$ so as to drive $\delta$ to zero.

\subsection{Weight Update Laws}
To update the weight estimates, at every time $t\in\mathbb{R}_{\geq 0}$ the BE is evaluated as $\delta_t\triangleq\delta(e,e,\hat{W}_c,\hat{W}_a,\hat{\theta})$, which implies that the control policy that influences the auxiliary dynamics in \eqref{eq:alt_dyn} is given by
\begin{equation}\label{eq:mu_t}
    \mu=\hat{\mu}(e,e,\hat{W}_a),
\end{equation}
where $\hat{\mu}$ is defined as in \eqref{eq:mu_hat}. Hence, the controller that influences the \emph{original system} in \eqref{eq:dyn} is
\begin{equation}
    u=-g^{\dag}(x_d)\hat{\theta}^TY(x) + \mu,
\end{equation}
where $\mu$ is given by \eqref{eq:mu_t}. To relax the restrictive persistence of excitation (PE) condition that is traditionally required to identify the value function, at every time $t\in\mathbb{R}_{\geq 0}$ off-policy trajectories are sampled using the family of functions $\{e_i\,:\,\mathbb{R}^n\times\mathbb{R}_{\geq0}\rightarrow\mathbb{R}^n\}_{i=1}^{N}$, where each function $e_i(e(t),t)\in B_a(e(t))$ defines a mapping from the current state $e$ to an unexplored point in $B_a(e)$. The BE is then evaluated at each sampled point as $\delta_{ti}\triangleq\delta(e_i,e,\hat{W}_c,\hat{W}_a,\hat{\theta})$. Using both $\delta_t$ and $\delta_{ti}$ the value function weights are then updated using a recursive least-squares update law with a forgetting factor as
\begin{equation}\label{eq:Wc_dot}
    \dot{\hat{W}}_c=-k_{c1}\Gamma\frac{\omega}{\rho^2}\delta_t-\frac{k_{c2}}{N}\Gamma\sum_{i=1}^N\frac{\omega_i}{\rho_i^2}\delta_{ti},
\end{equation}
where $\Gamma(t)\in\mathbb{R}^{L\times L}$ is a dynamic gain matrix, $\rho(t)\triangleq1+\gamma_1\omega(t)^T\omega(t)$ is a normalization term, $k_{c1},\,k_{c2},\,\gamma_1\in\mathbb{R}_{>0}$ are positive constant gains and $\omega(t)$ is a regressor vector defined as\footnote{All terms with the subscript $i$ denote that a function is evaluated at the sampled state $e_i$. For example, $\omega_i\triangleq\nabla\sigma(e_i,c(e))\Big(\hat{F}(e_i,\hat{\theta}) + G(e_i)\hat{\mu}(e_i,e,\hat{W}_a)\Big)$.} 
\begin{equation}
    \omega(t)\triangleq\nabla_e\sigma(e(t),c(e(t)))\Big(\hat{F}(e(t),\hat{\theta}(t)) + G(e(t))\mu(t)\Big).
\end{equation}
The matrix $\Gamma(t)$ in \eqref{eq:Wc_dot} evolves according to
\begin{equation}\label{eq:gamma_dot}
    \dot{\Gamma}=\beta\Gamma-k_{c1}\Gamma\Lambda\Gamma - \frac{k_{c2}}{N}\Gamma\sum_{i=1}^N\Lambda_i\Gamma,
\end{equation}
where $\beta\in\mathbb{R}_{>0}$ is a forgetting factor and $\Lambda(t)\triangleq\frac{\omega(t)\omega(t)^T}{\rho^2(t)}$. Based on the analysis in the subsequent subsection the control policy weights are selected to evolve according to
\begin{multline}\label{eq:Wa_dot}
    \dot{\hat{W}}_a=-k_{a1}(\hat{W}_a-\hat{W}_c)-k_{a2}\hat{W}_a\\
    + \frac{k_{c1}G_{\sigma}^T\hat{W}_a\omega^T\hat{W}_c}{4\rho^2} + \sum_{i=1}^N\frac{k_{c2}G_{\sigma i}^T\hat{W}_a\omega_i^T\hat{W}_c}{4N\rho_i^2},
\end{multline}
where $k_{a1},\,k_{a2}\in\mathbb{R}_{>0}$ are positive constant learning gains and $G_{\sigma}\triangleq\nabla_e\sigma(e,c(e))G_R(e)\nabla_{e}\sigma(e,c(e))^T$ and $G_R\triangleq G(e)R^{-1}G(e)^T$.

\subsection{Convergence Results}
The following assumption outlines excitation conditions under which convergence of the approximation scheme to a neighborhood of the optimal solution can be established:

\begin{assumption}[\cite{RushiAutomatica16_StaF}]\label{assumption:FE}
There exist constants $T\in\mathbb{R}_{>0}$ and $\underline{c}_1,\,\underline{c}_2,\,\underline{c}_3\in\mathbb{R}_{\geq0}$ with at least one of $\underline{c}_1,\,\underline{c}_2,\,\underline{c}_3$ strictly positive such that 
\begin{subequations}
    \begin{equation}\label{eq:FE1}
        \underline{c}_1I_L\leq\frac{1}{N}\int_{t}^{t+T}\left(\sum_{i=1}^N\Lambda_i(t)\right)d\tau,\quad\forall t\in\mathbb{R}_{\geq 0}
    \end{equation}
    \begin{equation}\label{eq:FE2}
        \underline{c}_2I_L\leq\inf_{t\in\mathbb{R}_{\geq 0}}\left(\frac{1}{N}\sum_{i=1}^N\Lambda_i(t) \right)
    \end{equation}
    \begin{equation}\label{eq:FE3}
        \underline{c}_3I_L\leq\int_{t}^{t+T}\left(\sum_{i=1}^N\Lambda(t)\right)d\tau,\quad\forall t\in\mathbb{R}_{\geq 0}.
    \end{equation}
\end{subequations}
\end{assumption}
The condition in \eqref{eq:FE3} asks for the regressor $\omega$, which depends on the trajectory of the system, to be PE which is a strong condition that is notoriously difficult to verify online. On the other hand, the conditions in \eqref{eq:FE1} and \eqref{eq:FE2} are placed on $\omega_i$, which depends on the sampled trajectories and can be verified online \cite{PatTRO20}. Hence, one can select the family of functions $\{e_i\,:\,\mathbb{R}^n\times\mathbb{R}_{\geq0}\rightarrow\mathbb{R}^n\}_{i=1}^{N}$ so that $\omega_i$ satisfies \eqref{eq:FE1} or \eqref{eq:FE2} by either sampling highly oscillatory extrapolated trajectories, for \eqref{eq:FE1}, or by sampling a large number of extrapolated trajectories, for \eqref{eq:FE2}. Indeed, multiple results \cite{RushiAutomatica16_StaF,PatTRO20,PatAutomatica20} have demonstrated that Assumption \ref{assumption:FE} can be satisfied with as few as one time-varying extrapolation point. Moreover, if Assumption \ref{assumption:FE} is satisfied and $\lambda_{\min}\{\Gamma^{-1}(0)\}>0$, the update law in \eqref{eq:gamma_dot} can be used to show that the matrix $\Gamma$ satisfies $\underline{\Gamma}I_L\leq\Gamma(t)\leq\overline{\Gamma}I_L$ for all $t\geq 0$, where $\underline{\Gamma},\,\overline{\Gamma}\in\mathbb{R}_{>0}$ \cite[Lemma 1]{RushiAutomatica16_StaF}.

Now consider the following Lyapunov function candidate
\begin{equation}\label{eq:V_L}
    V_L(Z_L,t)=V^*(e)+\frac{1}{2}\tilde{W}_c^T\Gamma^{-1}(t)\tilde{W}_c + \frac{1}{2}\tilde{W}_a^T\tilde{W}_a + V_{\theta}(\tilde{\theta},t),
\end{equation}
where $Z_L\triangleq[e^T,\,\tilde{W}_c^T,\,\tilde{W}_a^T,\,\text{vec}(\tilde{\theta})^T]^T$ is a composite state vector, $\tilde{W}_c\triangleq W-\hat{W}_c$, $\tilde{W}_a\triangleq W-\hat{W}_a$ denote the weight estimation errors, and $V_{\theta}$ is from Sec. \ref{sec:sys_id}. Since $V^*$ is positive definite, $V_L$ is positive definite and can hence be bounded as $\underline{v}_l(\|Z_L\|)\leq V_L(Z_L,t)\leq\overline{v}_l(\|Z_L\|)$ where $\underline{v}_l,\,\overline{v}_l\,:\,\mathbb{R}_{\geq0}\rightarrow\mathbb{R}_{\geq0}$ are class $\mathcal{K}$ functions \cite[Lemma 4.3]{Khalil_book}. Moreover, let $\mathcal{B}_{\zeta}\subset\chi\times\mathbb{R}^{2L\times np_1}$ be a closed ball of radius $\zeta\in\mathbb{R}_{>0}$ centered at the origin and define the operator $\overline{\| (\cdot)\|}\triangleq\sup_{\pi\in\mathcal{B}_{\zeta}}\|(\cdot)\|$. The following proposition illustrates that the developed learning strategy guarantees uniformly ultimately bounded convergence to the optimal solution.

\begin{proposition}\label{prop:UUB}
Provided Assumptions \ref{assumption:g}-\ref{assumption:FE} hold and the following sufficient conditions are satisfied
\begin{equation}\label{eq:gains}
\begin{aligned}
    \underline{c}>\frac{\varphi_{c\theta} + \varphi_{ac}}{k_{c2}}, & \quad & k_{a1}+k_{a2}>2\varphi_a+\varphi_{ac}, \\ K_{\theta}>\varphi_{c\theta}, & \quad & v_l^{-1}(\iota)<\overline{v}_l^{-1}\big(\underline{v}_l(\zeta) \big),
\end{aligned}
\end{equation}
where $\underline{c}\triangleq\frac{\underline{c}_2}{2}+\frac{\beta}{2\overline{\Gamma}k_{c2}}$, and $\varphi_{c\theta},\,\varphi_{ac},\,\varphi_a,\,\iota\in\mathbb{R}_{>0}$ are positive constants and $v_l\,:\,\mathbb{R}_{\geq0}\rightarrow\mathbb{R}_{\geq0}$ is a class $\mathcal{K}$ function developed in the subsequent proof,
then the control policy in \eqref{eq:mu_t} and the update laws in \eqref{eq:Wc_dot}, \eqref{eq:gamma_dot}, \eqref{eq:Wa_dot} guarantee that $Z_L(t)$ is uniformly ultimately bounded in the sense that $\limsup_{t\rightarrow\infty}\|Z_L(t)\|\leq\underline{v}_l^{-1}\Big(\overline{v}_l\big(v_l^{-1}(\iota) \big) \Big).$ Moreover, the set $\text{Int}(\mathcal{X}_o)$ is forward invariant.
\end{proposition}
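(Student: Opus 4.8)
The plan is to establish this as a standard Lyapunov-based uniformly ultimately bounded (UUB) result for the composite system, and then separately argue forward invariance of $\text{Int}(\mathcal{X}_o)$ as a consequence of the barrier function being embedded in the cost. First I would compute the orbital derivative $\dot{V}_L$ of the Lyapunov candidate in \eqref{eq:V_L} along the closed-loop trajectories generated by the auxiliary dynamics \eqref{eq:alt_dyn} under the policy \eqref{eq:mu_t}, the value-function weight update laws \eqref{eq:Wc_dot}--\eqref{eq:gamma_dot}, the actor update \eqref{eq:Wa_dot}, and the ICL identifier \eqref{eq:ICL_update}. The $V^*(e)$ term contributes $\nabla_e V^*(F + G\mu)$; after adding and subtracting the optimal policy $\mu^*$ and invoking the HJB equation \eqref{eq:HJB}, this yields $-Q(e) - B(e)$ plus cross terms that are quadratic in $\tilde{W}_a$ and linear in the approximation errors $\epsilon$, $\epsilon_\theta$. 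The $\frac{1}{2}\tilde{W}_c^T\Gamma^{-1}\tilde{W}_c$ term, using $\frac{d}{dt}\Gamma^{-1} = -\Gamma^{-1}\dot{\Gamma}\Gamma^{-1}$ together with \eqref{eq:gamma_dot}, produces a negative term of the form $-\tilde{W}_c^T\big(\tfrac{k_{c2}}{N}\sum_i \Lambda_i + \tfrac{\beta}{2}\Gamma^{-1}\big)\tilde{W}_c$ modulo residuals, which is where the excitation constant $\underline{c}_2$ in Assumption \ref{assumption:FE} and the bound $\underline{\Gamma}I_L \le \Gamma(t) \le \overline{\Gamma}I_L$ enter to lower-bound this by $-\underline{c}\,k_{c2}\|\tilde{W}_c\|^2$. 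The actor term $\frac{1}{2}\tilde{W}_a^T\tilde{W}_a$ contributes $-(k_{a1}+k_{a2})\|\tilde{W}_a\|^2$ plus indefinite cross terms bounded by $\varphi_a\|\tilde{W}_a\|^2 + \varphi_{ac}\|\tilde{W}_a\|\|\tilde{W}_c\|$, and $\dot{V}_\theta \le -K_\theta\|\tilde{\theta}\|^2 + D_\theta\|\tilde{\theta}\|$ comes directly from Sec. \ref{sec:sys_id}.

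Next I would collect all the negative definite terms and dominate the indefinite cross terms using Young's inequality, absorbing the $\tilde{W}_c$--$\tilde{W}_a$ and $\tilde{W}_c$--$\tilde{\theta}$ couplings into the diagonal quadratic terms. This is precisely where the gain conditions \eqref{eq:gains} are used: $\underline{c} > (\varphi_{c\theta}+\varphi_{ac})/k_{c2}$ ensures the net coefficient on $\|\tilde{W}_c\|^2$ stays positive after the cross terms are split off, $k_{a1}+k_{a2} > 2\varphi_a+\varphi_{ac}$ does the same for $\|\tilde{W}_a\|^2$, and $K_\theta > \varphi_{c\theta}$ for $\|\tilde{\theta}\|^2$. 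The upshot is a bound of the form $\dot{V}_L \le -v_l(\|Z_L\|)$ whenever $\|Z_L\| \ge v_l^{-1}(\iota)$, where $v_l$ is the class-$\mathcal{K}$ function assembled from the surviving quadratic coefficients and $\iota$ lumps together the residual constants coming from $\bar\epsilon_\theta$, $\bar\epsilon_{\theta 1}$, $D_\theta$, and the StaF reconstruction error $\epsilon$. Invoking a standard UUB theorem (e.g. \cite[Thm. 4.18]{Khalil_book}) together with the sandwich bounds $\underline{v}_l(\|Z_L\|) \le V_L \le \overline{v}_l(\|Z_L\|)$ then gives $\limsup_{t\to\infty}\|Z_L(t)\| \le \underline{v}_l^{-1}(\overline{v}_l(v_l^{-1}(\iota)))$; the final condition $v_l^{-1}(\iota) < \overline{v}_l^{-1}(\underline{v}_l(\zeta))$ guarantees that the ultimate bound lies inside $\mathcal{B}_\zeta$, so that all the ``$\overline{\|\cdot\|}$'' sup-norm bounds over $\mathcal{B}_\zeta$ used along the way are legitimate — this is the usual local-on-$\chi$ caveat of StaF-based ADP.

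For the forward invariance claim, the argument mirrors the proof of Proposition \ref{prop:GUAS}: because $B(e) \to \infty$ as $z(e) \to \partial\mathcal{X}_o$ for any $o \in \bar{O}_s$, and $V^*$ (hence $V_L$) inherits this blow-up through the penalty in the cost \eqref{eq:r}, boundedness of $V_L(Z_L(t),t)$ — which follows from $\dot{V}_L < 0$ outside the ultimate-bound ball — precludes $z(e(t))$ from ever approaching $\partial\mathcal{X}_o$; equivalently, any sublevel set of $V_L$ on which the trajectory is trapped is contained in $\text{Int}(\mathcal{X}_o)$, so $h_o(x(t)) > 0$ for all $t$ and $\text{Int}(\mathcal{X}_o)$ is forward invariant. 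One subtlety worth flagging is that $\hat{\mu}$ uses $\nabla_y\bar{B}$ rather than $\nabla_y B$, so strictly one argues via the bounded reshaped barrier $\bar{B}$ in the approximators while the unbounded $B$ in $V^*$ does the trapping; since $\bar B$ and $B$ share the same sublevel-set geometry away from $\partial\mathcal{X}_o$, the conclusion is unaffected.

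I expect the main obstacle to be bookkeeping the cross terms cleanly — in particular, the terms arising from the Bellman-error regressor $\omega$ (which itself depends on $\hat\theta$, $\hat{W}_a$, and the sampled states $e_i$) couple all four blocks of $Z_L$ at once, so isolating which pieces are genuinely negative definite versus which must be treated as bounded perturbations on $\mathcal{B}_\zeta$ requires care, and the constants $\varphi_{c\theta},\varphi_{ac},\varphi_a,\iota$ must be defined consistently so that the gain conditions \eqref{eq:gains} are simultaneously satisfiable. This is routine in the model-based RL literature (it closely parallels the analysis in \cite{RushiAutomatica16_StaF,RushiTNNLS19,PatCDC18}) but is the step where the bulk of the technical work lies; I would import the relevant intermediate bounds from those references rather than rederiving them.
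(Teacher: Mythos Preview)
Your proposal is correct and follows essentially the same route as the paper: compute $\dot V_L$, use the HJB identity to extract $-Q(e)-B(e)$ from the $V^*$ term, exploit Assumption~\ref{assumption:FE} and the $\Gamma$ bounds to get the $-\underline{c}\,k_{c2}\|\tilde W_c\|^2$ term, absorb cross terms via Young's inequality under the gain conditions \eqref{eq:gains}, and then invoke \cite[Thm.~4.18]{Khalil_book} for UUB, with forward invariance of $\text{Int}(\mathcal{X}_o)$ following from boundedness of $V_L$ and the blow-up of $V^*$ (hence $V_L$) at $\partial\mathcal{X}_o$. The paper's proof is terser but structurally identical, and your flagged subtlety about $\bar B$ versus $B$ is a fair observation that the paper leaves implicit.
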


By the definition of uniform ultimate boundedness \cite[Def. 4.6]{Khalil_book}, the above proposition implies that $x$ reaches some neighborhood of $x_d$ in \emph{finite time}, the size of which depends on the parameters associated with the learning scheme (see Remark \ref{remark:UUB}).

\begin{proof}
The derivative of $V_L$ along the trajectory of the auxiliary system in \eqref{eq:alt_dyn} is
\begin{multline}
    \dot{V}_L=\dot{V}^* + \tilde{W}_c^T\Gamma^{-1}(\dot{W}-\dot{\hat{W}}_c) - \tfrac{1}{2}\tilde{W}_c^T\Gamma^{-1}\dot{\Gamma}\Gamma^{-1}\tilde{W}_c\\
    +\tilde{W}_a^T(\dot{W}-\dot{\hat{W}}_a) - K_{\theta}\|\tilde{\theta}\|^2 + D_{\theta}\|\tilde{\theta}\|. 
\end{multline}
Using the definition of the weight estimation errors, the BE can be expressed as $\delta_t=-\omega^T\tilde{W}_c+\tfrac{1}{4}\tilde{W}_a^TG_{\sigma}\tilde{W}_a-\Delta_F\tilde{F}+\Delta$ where $\Delta_F\triangleq W^T\nabla_e\sigma+\nabla_e\bar{B}$, $\tilde{F}\triangleq\hat{F}(e,\tilde{\theta})$, and $\Delta$ consists of terms that are uniformly bounded over $\chi$. Using $\dot{W}=\nabla_e W(F+G\mu)$, the HJB \eqref{eq:HJB}, substituting in the update laws \eqref{eq:Wc_dot}, \eqref{eq:gamma_dot}, \eqref{eq:Wa_dot}, and upper bounding yields
\begin{equation}\label{eq:VL_dot1}
    \begin{aligned}
        \dot{V}_L & \leq -\psi(\|e\|) - \underline{c}k_{c2}\|\tilde{W}_c\|^2 + \big(\varphi_a-k_{a1}-k_{a2}\big)\|\tilde{W}_a\|^2 \\
        & -K_{\theta}\|\tilde{\theta}\|^2 + \varphi_{c\theta}\|\tilde{W}_c\| \|\tilde{\theta}\| + \varphi_{ac}\|\tilde{W}_a\| \|\tilde{W}_c\| + \iota_c\|\tilde{W}_c\| \\
        & + \iota_a\|\tilde{W}_a\| + D_{\theta}\|\tilde{\theta}\| + \tfrac{1}{2}\overline{\|\nabla_e V^* \| \|G_R\|\|\nabla_e\sigma^TW + \nabla_e\epsilon^T\|},
    \end{aligned}
\end{equation}
where $\varphi_a,\,\varphi_{c\theta},\,\varphi_{ac},\,\iota_c,\,\iota_a\in\mathbb{R}_{>0}$ are positive constants defined in the appendix. Provided the gains are selected according to \eqref{eq:gains}, applying Young's Inequality, segregating terms, and completing the square allows \eqref{eq:VL_dot1} to be further bounded as
\begin{multline}\label{eq:VL_dot2}
    \dot{V}_L\leq -\psi(\|e\|) - \tfrac{\underline{c}k_{c2}}{2}\|\tilde{W}_c\|^2 - \tfrac{k_{a1}+k_{a2}}{2}\|\tilde{W}_a\|^2 - \tfrac{K_{\theta}}{2}\|\tilde{\theta}\|^2\\
    + \tfrac{\iota_c^2}{2\underline{c}k_{c2}} + \tfrac{\iota_a^2}{2(k_{a1}+k_{a2})} + \tfrac{D_{\theta}^2}{2K_{\theta}} + \tfrac{\overline{\|\nabla_e V^* \| \|G_R\|\|\nabla_e\sigma^TW + \nabla_e\epsilon^T\|}}{2},
\end{multline}
which implies that $V_L$ is bounded and decreasing outside of a small compact set centered at the origin such that
\begin{equation}\label{eq:VL_dot3}
    \dot{V}_L\leq-v_l(\|Z_L\|),\quad \forall \|Z_L\|>v_l^{-1}(\iota),
\end{equation}
where $v_l\,:\,\mathbb{R}_{\geq0}\rightarrow\mathbb{R}_{\geq0}$ is a class $\mathcal{K}$ function that satisfies 
\begin{equation}\label{eq:v_l}
    v_l(\|Z_L\|)\leq\tfrac{\psi(\|e\|)}{2} + \tfrac{\underline{c}k_{c2}}{4}\|\tilde{W}_c\|^2 + \tfrac{k_{a1}+k_{a2}}{4}\|\tilde{W}_a\|^2 + \tfrac{K_{\theta}}{4}\|\tilde{\theta}\|^2,
\end{equation}
and $\iota\in\mathbb{R}_{>0}$ is a positive constant defined as
\begin{equation}\label{eq:iota}
    \iota\triangleq\tfrac{\iota_c^2}{\underline{c}k_{c2}} + \tfrac{\iota_a^2}{k_{a1}+k_{a2}} + \tfrac{D_{\theta}^2}{K_{\theta}} + \overline{\|\nabla_e V^* \| \|G_R\|\|\nabla_e\sigma^TW + \nabla_e\epsilon^T\|}.
\end{equation}
The bound in \eqref{eq:VL_dot3} and the fact that $V_L$ is bounded as $\underline{v}_l(\|Z_L\|)\leq V_L(Z_L,t)\leq\overline{v}_l(\|Z_L\|)$ can be used with \cite[Thm. 4.18]{Khalil_book} to conclude that $Z_L$ is uniformly ultimately bounded. Similar to Prop. \ref{prop:GUAS}, the boundedness of $V_L$ implies that system trajectories cannot reach any ROI that are negated at the current FSA state. Since $Z_L\in\mathcal{L}_{\infty}$, then $e,\tilde{W}_c,\tilde{W}_a,\tilde{\theta}\in\mathcal{L}_{\infty}$. Since $W$ is a continuous function of $e$, then   $W\in\mathcal{L}_{\infty}$, which implies that $\hat{W}_c,\hat{W}_a\in\mathcal{L}_{\infty}$ and thus $\mu\in\mathcal{L}_{
\infty}$.
\end{proof}

\begin{remark}\label{remark:UUB}
The sufficient conditions from \eqref{eq:gains} can be satisfied by increasing the number of basis functions for value function approximation $L$, which reduces the residual error $\epsilon$, and by sampling off-policy trajectories that result in large values of $\underline{c}$. Similarly, increasing the number of basis function used to estimate the unknown dynamics $p_1$ and size of the history stack $M$ will help satisfy \eqref{eq:gains}. Additionally, selecting $R$ with a large minimum eigenvalue and increasing $k_{a2}$ and $\gamma_1$ can help satisfy \eqref{eq:gains}.
\end{remark}

Now consider an \emph{augmented} hybrid system $\hat{\mathcal{H}}=(\hat{\mathcal{F}},\hat{\mathcal{G}},\hat{\mathcal{C}},\hat{\mathcal{D}})$ with state $x_{\hat{\mathcal{H}}}\triangleq[x^T,\,\tilde{W}_c^T,\,\tilde{W}_a^T,\,\text{vec}(\tilde{\theta})^T,\,s,\,o]^T$ that captures the dynamics of the plant, automaton, and learning strategy, where
\begin{subequations}\label{eq:H_aug}
\begin{equation}\label{eq:H_augF}
\hat{\mathcal{F}}(x_{\hat{\mathcal{H}}})\triangleq
\begin{bmatrix}
\dot{x}^T & \dot{\tilde{W}}_c^T & \dot{\tilde{W}}_a^T & \text{vec}(\dot{\tilde{\theta}})^T & 0 & 0
\end{bmatrix}^T,
\end{equation}
\begin{equation}\label{eq:H_augG}
    \hat{\mathcal{G}}(x_{\hat{\mathcal{H}}})\triangleq
\begin{bmatrix}
x^T & (\tilde{W}_c^+)^T & (\tilde{W}_a^+)^T & \text{vec}(\tilde{\theta})^T & p(s,o) & v
\end{bmatrix}^T,
\end{equation}
\begin{multline}\label{eq:H_augC}
    \hat{\mathcal{C}}\triangleq\{(x,\tilde{W}_c,\tilde{W}_a,\tilde{\theta},s,o )\,:\,x\in\text{cl}(\mathbb{R}^n\backslash\mathcal{R}_o),\,\tilde{W}_c\in\mathbb{R}^L\\\tilde{W}_a\in\mathbb{R}^L,\,\tilde{\theta}\in\mathbb{R}^{p_1\times n},\,s\in S,\,o\in O_s \},
\end{multline}
\begin{multline}\label{eq:H_augD}
    \hat{\mathcal{D}}\triangleq\{(x,\tilde{W}_c,\tilde{W}_a,\tilde{\theta},s,o )\,:\,x\in\mathcal{R}_o,\,\tilde{W}_c\in\mathbb{R}^L\\\tilde{W}_a\in\mathbb{R}^L,\,\tilde{\theta}\in\mathbb{R}^{p_1\times n},\,s\in S,\,o\in O_s \}.
\end{multline}
\end{subequations}
In \eqref{eq:H_augG}, the terms $\tilde{W}_c^+,\,\tilde{W}_a^+$ denote the jump in the weight estimation errors when switching from one OCP to the next as the ideal weights corresponding to the subsequent optimal value function and control policy may be different from the previous OCP. The following theorem shows that if the ultimate bound in Proposition \ref{prop:UUB} is sufficiently small, then the Lyapunov function $V_L$ and the DTA $V_d$ from Sec. \ref{sec:DTA} can be used construct a barrier certificate for \eqref{eq:H_aug}.

\begin{theorem}\label{thm:barrier3}
Suppose Assumptions \ref{assumption:ROI}-\ref{assumption:FE} and the hypothesis of Proposition \ref{prop:UUB} hold and the positive constant $\iota$ from \eqref{eq:iota} is sufficiently small. Then, for some $\lambda\in\mathbb{R}_{>0}$ the function
\begin{equation}
    \hat{V}_B\triangleq V_d(s) + \lambda V_L(Z_L,t),
\end{equation}
where $V_d$ is from \eqref{eq:DTA} and $V_L$ is from \eqref{eq:V_L}, is a barrier certificate for \eqref{eq:H_aug} with respect to 
\begin{multline}\label{eq:Rf_hat}
    \hat{\mathcal{R}}_f\triangleq\{(x,\tilde{W}_c,\tilde{W}_a,\tilde{\theta},s,o )\,:\,x\in\mathbb{R}^n,\,\tilde{W}_c\in\mathbb{R}^L\\\tilde{W}_a\in\mathbb{R}^L,\,\tilde{\theta}\in\mathbb{R}^{p_1\times n},\,s\in S_f,\,o\in O_s \}.
\end{multline}
\end{theorem}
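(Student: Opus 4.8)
The plan is to verify the four conditions of Definition~\ref{def:barrier} for $\hat V_B$ with respect to $\hat{\mathcal{R}}_f$, following the template of the proof of Theorem~\ref{thm:barrier2} with $V_L$ playing the role played there by $V^*$. First I would dispatch the structural prerequisites. The hybrid basic conditions for \eqref{eq:H_aug} follow from continuity of $\hat{\mathcal{F}}$ --- itself a consequence of continuity of the plant dynamics, of the smooth projection operator, of the regressor $\omega$ and of the normalization $\rho\ge 1$, and of the weight update laws --- together with \cite[Lemma 2]{DimosTAC20}, exactly as in Theorem~\ref{thm:barrier2}. The requirement $\hat{\mathcal{G}}(\text{cl}(\hat{\mathcal{D}}\backslash\hat{\mathcal{R}}_f))\subset\hat{\mathcal{C}}\cup\hat{\mathcal{D}}$ holds because a jump produces $(x,\tilde W_c^+,\tilde W_a^+,\tilde\theta,p(s,o),v)$ with $v\in O_{p(s,o)}^d\subseteq O_{p(s,o)}$, $p(s,o)\in S$, and $\hat{\mathcal{C}}\cup\hat{\mathcal{D}}$ imposes no restriction on $x$ or the weight-error states. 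Continuity of $\hat V_B$ on $\text{cl}(\hat{\mathcal{C}}\backslash\hat{\mathcal{R}}_f)\cup\text{cl}(\hat{\mathcal{D}}\backslash\hat{\mathcal{R}}_f)$ and $C^1$ regularity on an open neighborhood of $\text{cl}(\hat{\mathcal{C}}\backslash\hat{\mathcal{R}}_f)$ follow from the corresponding properties of $V_d$ (locally constant in the continuous states), of $V^*$ (assumed $C^1$ near $\text{cl}(\mathcal{C}\backslash\mathcal{R}_f)$), of the quadratic forms in $\tilde W_c,\tilde W_a$, and of $V_\theta$; the explicit dependence of $V_L$ on $t$ enters only through $\Gamma^{-1}(t),\Gamma_\theta(t)$, which are sandwiched by positive constants under Assumptions~\ref{assumption:ICL}--\ref{assumption:FE}. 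Those sandwich bounds also give $\underline v_l(\|Z_L\|)\le V_L\le\overline v_l(\|Z_L\|)$ with class $\mathcal{K}$ functions, make $V_L$ radially unbounded in $Z_L$, and, combined with radial unboundedness of $V_d$ in $s$, establish conditions~1 and~2 of Definition~\ref{def:barrier}.

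For condition~3 I would note $\dot s=0$ along flows, so $\dot{\hat V}_B=\lambda\dot V_L$, and \eqref{eq:VL_dot3} gives $\dot V_L\le-v_l(\|Z_L\|)$ whenever $\|Z_L\|>v_l^{-1}(\iota)$. The crucial point is a uniform lower bound on the regulation error over the flow set: on $\hat{\mathcal{C}}$ we have $x\in\text{cl}(\mathbb{R}^n\backslash\mathcal{R}_o)=\mathbb{R}^n\backslash\text{Int}(\mathcal{R}_o)$ while $x_d\in\text{Int}(\mathcal{R}_o)$, so by Assumption~\ref{assumption:ROI} there is a ball $B_r(x_d)\subset\text{Int}(\mathcal{R}_o)$ and hence $\|e\|\ge r$; taking the minimum of such radii over the finitely many ROIs gives a uniform $r>0$, so $\|Z_L\|\ge\|e\|\ge r$ on all of $\hat{\mathcal{C}}$. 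If $\iota$ is small enough that $v_l^{-1}(\iota)<r$, then $\dot V_L\le-v_l(\|Z_L\|)\le-v_l(r)<0$ and condition~3 holds with any $\epsilon_b\in(0,\lambda v_l(r))$. This is precisely where ``$\iota$ sufficiently small'' is used: it forces the ultimate bound of Proposition~\ref{prop:UUB} to sit strictly inside every target ROI, so the decrease of $V_L$ is uninterrupted while the plant is outside its current region. (As in Proposition~\ref{prop:UUB}, boundedness of $V_L$ also keeps trajectories clear of the negated ROIs, so $V^*$ remains $C^1$ along solutions.)

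For condition~4, at $(x,\tilde W_c,\tilde W_a,\tilde\theta,s,o)\in\text{cl}(\hat{\mathcal{D}}\backslash\hat{\mathcal{R}}_f)$ the jump in $\hat V_B$ equals $(V_d(p(s,o))-V_d(s))+\lambda(V_L^+-V_L)$, where $V_L^+$ denotes $V_L$ at the post-jump state (new error $e^+=x-x_d^+$, new ideal weights $W^+$, reset errors $\tilde W_c^+,\tilde W_a^+$, unchanged $\tilde\theta$, automaton state $p(s,o)$). Since $s\notin S_f$ on $\text{cl}(\hat{\mathcal{D}}\backslash\hat{\mathcal{R}}_f)$ and $o$ was chosen at the preceding jump through \eqref{eq:v} so as to strictly decrease the integer-valued DTA, $V_d(p(s,o))-V_d(s)\le-1$, exactly as in Theorem~\ref{thm:barrier2}. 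It remains to bound $V_L^+-V_L\le V_L^+\le\bar{V}_L$ for a finite constant $\bar{V}_L$: within each phase $V_L$ is nonincreasing once $\|Z_L\|>v_l^{-1}(\iota)$, the change at a switch is bounded because $x$ does not jump and stays in $\chi$, the new ideal weights $W^+$ are bounded on $\chi$, and the estimates remain bounded, while there are at most $V_d(s_0)$ switches since $V_d$ strictly decreases at each; hence $\bar{V}_L$ depends only on the finite family of OCPs and the learning parameters. Choosing $\lambda\le 1/(2\bar{V}_L)$ then yields $\hat V_B(\hat{\mathcal{G}})-\hat V_B\le-\tfrac12$, so condition~4 holds with $\epsilon_b=\tfrac12$ (take the smaller of this and the value from condition~3). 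With all four conditions verified, $\hat V_B$ is a barrier certificate and Theorem~\ref{thm:barrier} gives the eventuality property for \eqref{eq:H_aug} with respect to $\hat{\mathcal{R}}_f$.

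The main obstacle I expect is securing the two small-parameter arguments \emph{simultaneously and uniformly} over the finite family of reach-avoid OCPs indexed by the automaton states: one must show the $v_l$ functions produced by Proposition~\ref{prop:UUB} admit a common lower-bounding class $\mathcal{K}$ function (so a single threshold on $\iota$ suffices for condition~3) and that the post-switch composite Lyapunov values admit a common upper bound $\bar{V}_L$ (so a single $\lambda$ suffices for condition~4). The latter is the more delicate, since $V_L^+-V_L$ absorbs both the discontinuous change of the optimal value function across an OCP switch and the reset of the NN weight estimates; it leans on Proposition~\ref{prop:UUB} guaranteeing bounded trajectories within each phase and on the uniform gain bounds $\underline{\Gamma}I\le\Gamma\le\overline{\Gamma}I$ and $\underline{\Gamma}_\theta I\le\Gamma_\theta\le\overline{\Gamma}_\theta I$, which neutralize the explicit time dependence of $V_L$.
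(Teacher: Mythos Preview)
Your proposal is correct and follows essentially the same route as the paper's proof sketch: verify the conditions of Definition~\ref{def:barrier} by mimicking Theorem~\ref{thm:barrier2} with $V_L$ in place of $V^*$, invoking \eqref{eq:VL_dot3} for the flow condition and the integer drop of $V_d$ together with a small $\lambda$ for the jump condition. Your treatment of condition~3 --- extracting a uniform lower bound $\|Z_L\|\ge\|e\|\ge r>0$ on $\hat{\mathcal{C}}$ from $x_d\in\text{Int}(\mathcal{R}_o)$ and then requiring $v_l^{-1}(\iota)<r$ --- makes explicit exactly what the paper means by ``$\iota$ sufficiently small,'' and your discussion of the uniform bound $\bar V_L$ across the finite family of OCPs fills in detail the paper omits.
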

\begin{proof}[Proof sketch]
The proof follows the same steps as that of Theorem \ref{thm:barrier2}; however, more attention must be given to the third condition of Definition \ref{def:barrier}. Following the same steps as in Theorem \ref{thm:barrier2}, for $(x,\,\tilde{W}_c,\,\tilde{W}_a,\,\tilde{\theta},\,s,\,o)\in\text{cl}(\hat{\mathcal{C}}\backslash\hat{\mathcal{R}}_f)$ we have $\nabla{\hat{V}_B}\hat{\mathcal{F}}\leq-\lambda v_l(\|Z_L \|)$ as long as $\|Z_L\|>v_l^{-1}(\iota)$, where $v_l$ is the class $\mathcal{K}$ function and $\iota$ is the constant defined in \eqref{eq:v_l} and \eqref{eq:iota}, respectively. Hence, the third condition of Definition \ref{def:barrier} holds provided the bound $v_l^{-1}(\iota)$ is sufficiently small, which can be achieved by appropriately selecting the parameters associated with the learning scheme (see Remark \ref{remark:UUB2}). Verifying the first, second, and fourth condition of Definition \ref{def:barrier} follows closely to that of Theorem \ref{thm:barrier2} and is omitted. Given that $\hat{V}_B$ can be shown to satisfy the properties of a barrier certificate outlined in Definition \ref{def:barrier}, Theorem \ref{thm:barrier} can be invoked to show that the eventuality property holds for $\mathcal{\hat{H}}$ with respect to $\mathcal{\hat{R}}$. 
\end{proof}

\begin{remark}\label{remark:UUB2}
The condition that $\iota$ be sufficiently small is somewhat restrictive as it depends on terms such as the residual value function approximation error $\epsilon$, which are unknown and infeasible to compute in general \cite[Footnote 6]{RushiAutomatica16_StaF}. Despite the fact that it may be difficult to obtain a reasonable estimate of $\iota$, the bound $v_l^{-1}(\iota)$ is a class $\mathcal{K}$ function of $\iota$ and hence decreases with decreasing $\iota$. Based on the definition of $\iota$ in \eqref{eq:iota}, the bound can be decreased by selecting the parameters for the learning scheme as outlined in Remark \ref{remark:UUB}. Similar to previous works that have leveraged the model-based RL framework used in this paper \cite{RushiAutomatica16_StaF,PatTRO20,PatTNNLS18,PatAutomatica20,CohenCDC20}, we'll demonstrate numerically in the following section that a sufficient approximation can generally be achieved using as few as $L=n+1$ basis functions for value function approximation.
\end{remark}
\section{Numerical Examples}\label{sec:sim}

We consider a system as in \cite[Ch. 5.2]{Rushi_book}, \cite[\S VI-C]{PatTRO20} with $x=[x_1,\,x_2]^T$ and
\begin{equation}
    f(x)=\begin{bmatrix}
    -x_1 + x_2 \\ 
    -\tfrac{1}{2}x_1 - \tfrac{1}{2}x_2(1- (\cos(2x_1)+2)^2)
    \end{bmatrix}
\end{equation}
\begin{equation}
    g(x)=\begin{bmatrix}
    \sin(2x_1)+2 & 0\\
    0 & \cos(2x_1)+2
    \end{bmatrix}.
\end{equation}
To define a specification for the system, we consider non-overlapping ROI of the form $\mathcal{R}_{o_i}\triangleq\{x\in\mathbb{R}^2\,:\,\|x-x_{d,i}\|\leq r_i \}$, where $x_{d,i}\in\mathbb{R}^2$ and $r_i\in\mathbb{R}_{>0}$ denote the center and radius of the $i$th ROI, respectively, and $h_{o_i}(x)=r_i-\|x-x_{d,i}\|$, where $i\in\{1,\dots,5\}$. Using these ROI we construct a specification as an scLTL formula of the form \cite[\S 5.2]{TopcuHSCC17}
\begin{equation}\label{eq:scLTL1}
    \phi=\Diamond(\Diamond(o_1 \wedge \Diamond o_2) \vee \Diamond(o_2 \wedge \Diamond o_1) \wedge o_3 ) \wedge ((\neg o_4 \wedge \neg o_5)\, U\, o_3)
\end{equation}
which states ``visit $\mathcal{R}_{o_1}$ \textbf{then} $\mathcal{R}_{o_2}$ \textbf{or} visit $\mathcal{R}_{o_2}$ \textbf{then} $\mathcal{R}_{o_1}$ \textbf{and eventually} visit $\mathcal{R}_{o_3}$ \textbf{and don't visit} $\mathcal{R}_{o_4}$ and $\mathcal{R}_{o_5}$ \textbf{until} visiting $\mathcal{R}_{o_3}$." The FSA corresponding to the scLTL formula in \eqref{eq:scLTL1} is provided in Fig. \ref{fig:FSA}. Using the DTA from Sec. \ref{sec:DTA} we compute $w_O=o_1o_2o_3$ and $w_O=o_2o_1o_3$ as accepting words, which can be used to construct a sequence of optimal control problems as outlined in Section \ref{sec:opt_ctr}. 

\begin{figure}
    \centering
    \begin{tikzpicture}[shorten >=1pt,node distance=2cm,on grid,auto]
  \tikzstyle{every state}=[fill={rgb:black,1;white,10}]

  \node[state,initial]            (s_0)                 {$s_0$};
  \node[state]                    (s_1) [above right of=s_0]  {$s_1$};
  \node[state]                    (s_2) [below right of=s_0]  {$s_2$};
  \node[state]                    (s_3) [below right of=s_1]  {$s_3$};
  \node[state,accepting]          (s_4) [right of=s_3]        {$s_4$};
  \path[->]
  (s_0) edge                node {$o_1$}  (s_1)
  (s_0) edge                node {$o_2$}  (s_2)
  (s_0) edge  [loop below]  node {$\neg o_4 \wedge \neg o_5$}  ()
  (s_1) edge  [loop above]  node {$\neg o_4 \wedge \neg o_5$}  ()
  (s_1) edge                node {$o_2$}  (s_3)
  (s_2) edge  [loop below]  node {$\neg o_4 \wedge \neg o_5$}  ()
  (s_2) edge                node {$o_1$}  (s_3)
  (s_3) edge  [loop below]  node {$\neg o_4 \wedge \neg o_5$}  ()
  (s_3) edge                node {$o_3$}  (s_4);
\end{tikzpicture}
    \caption{FSA corresponding to the scLTL formula in \eqref{eq:scLTL1}. The double circle denotes the final state. A transition labeled by a Boolean logic formula stands for a set of transitions, each labeled by a singleton, e.g., the self transition labeled by $\neg o_4 \wedge \neg o_5$ at $s_1$ represents 3 self transitions labeled by $o_1$, $o_2$, and $o_3$, respectively.}
    \label{fig:FSA}
\end{figure}
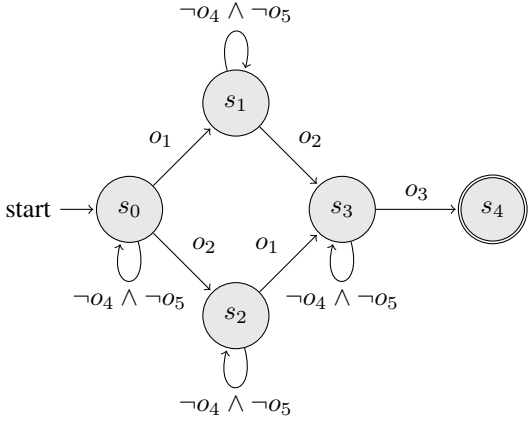

For each optimal control problem we select the cost function as in \eqref{eq:r} with $Q=\|e\|^2$ and $R=2I_2$. To ensure the trajectory never reaches $\mathcal{R}_{o_4}$ or $\mathcal{R}_{o_5}$ we construct a RBF as in \eqref{eq:B} with $b_{o_4}(x)=\tfrac{-1}{h_{o_4}(x)}$, $b_{o_5}(x)=\tfrac{-1}{h_{o_5}(x)}$ and then scale the RBF as $0.01\times B(x)$. To perform system identification we parameterize the drift dynamics as $f(x)=\theta^TY(x)+\epsilon_{\theta}(x)$ where the vector of basis functions is selected as $Y(x)=[x_1,\,x_2,\,x_2(1- (\cos(2x_1)+2)^2)]^T$ and thus the ideal weights are 
\begin{equation*}
    \theta=\begin{bmatrix}
    \theta_1 & \theta_4 \\ \theta_2 & \theta_5 \\ \theta_3 & \theta_6 
    \end{bmatrix},\,\theta_1= -1,\,\theta_2 = 1,\, \theta_3,\theta_5=0,\,\theta_4,\theta_6=-\tfrac{1}{2}.
\end{equation*}
Since an exact basis is used to represent $f(x)$, $\epsilon_{\theta}(x)=0$ and the weight estimates can be compared to their true values. The unknown dynamics are learned by using a pre-populated history stack, which is then used to estimate the unknown weights online using the update laws in \eqref{eq:ICL_update} with $k_{\theta}=15,\,\beta_{\theta}=10$, and $\Gamma_{\theta}(0)=20I_3$ where the initial weight estimates $\hat{\theta}(0)$ are drawn randomly from a uniform distribution between -5 and 5. The parameters for value function approximation are selected to be the same as in \cite[\S VII-A]{PatTNNLS18},  where $\Gamma(0)=15I_3$ and the initial weights for the value function $\hat{W}_c(0)$ and control policy $\hat{W}_a(0)$ are chosen as $\hat{W}_c(0)=\hat{W}_a(0)=4_{3\times1}$. The closed-loop system with an initial condition of $x(0)=[-2,2]^T$ is simulated for both accepting words in MATLAB/Simulink R2020b for 5 seconds running at 1000Hz on a 2019 MacBook Pro with a 1.4 GHz Quad-Core Intel i5 processor and 8 GB of memory. The computation time for a typical run is approximately 1.7 seconds as measured by MATLAB's \texttt{timeit} function.

\paragraph*{Results}
The results of the simulation are illustrated in Fig. \ref{fig:sim_x}- \ref{fig:sim_other}. In each figure the solid lines correspond to the simulation for $w_O=o_1o_2o_3$ whereas the simulation corresponding to $w_O=o_2o_1o_3$ is represented by dashed lines. In Fig. \ref{fig:sim_x} the yellow, purple, and green disks denote $\mathcal{R}_{o_1},\,\mathcal{R}_{o_2},\,\mathcal{R}_{o_3}$, respectively, and the red disks denote $\mathcal{R}_{o_4},\,\mathcal{R}_{o_5}$ where the initial condition is represented as a black square and the terminal condition is represented as a black star. The trajectory for both runs can be seen to successfully satisfy the scLTL specification by visiting the ROI in the correct order without ever entering $\mathcal{R}_{o_4},\,\mathcal{R}_{o_5}$. The evolution of the FSA state over time is provided in Fig. \ref{fig:sim_other} (top right), where the accepting words $w_O=o_1o_2o_3$ and $w_O=o_2o_1o_3$ are produced in around 1 second for each simulation. Fig. \ref{fig:sim_other} (bottom right) illustrates the evolution of the estimated value function and control policy weights. Because the optimal value function and policy are unknown the estimated weights cannot be compared to their true values. Moreover, this implies that the explicit evolution of the barrier certificate $\hat{V}_B$ is unknown since $V_L$ contains unknown terms. The evolution of the system identification error is shown in Fig. \ref{fig:sim_other} (bottom left), where the error is shown to rapidly decay to zero. Note that the history stack was populated with pre-recorded data; however, the weights are updated online during run time. The use of a pre-recorded history stack is made for ease of implementation and to improve computational efficiency. As previously mentioned, this data can be recorded online using algorithms such as those mentioned in \cite{ChowdharyACC11} at the cost of additional computation. 

\begin{figure}
    \centering
    \includegraphics[width=0.47\textwidth]{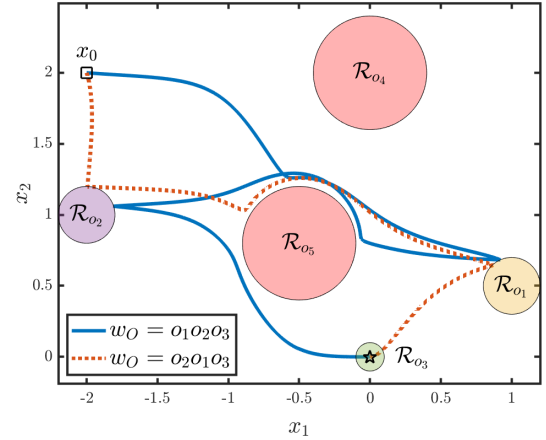}
    \caption{Phase portrait of the resulting system trajectory. The ROI $\mathcal{R}_{o_1},\,\mathcal{R}_{o_2},\,\mathcal{R}_{o_3}$ are represented by the yellow, purple, and green disks, respectively, and the red disks portray $\mathcal{R}_{o_4},\,\mathcal{R}_{o_5}$. The solid blue line represents the simulation corresponding to $w_O=o_1o_2o_3$ whereas the dashed orange line represents the simulation corresponding to $w_O=o_2o_1o_3$.}
    \label{fig:sim_x}
\end{figure}

\begin{figure}
    \centering
    \includegraphics[width=0.47\textwidth]{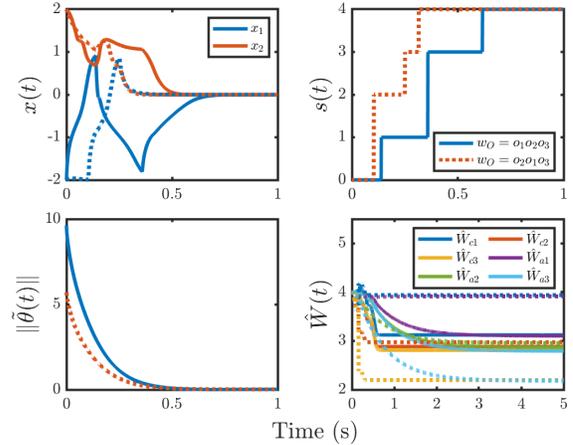}
    \caption{Evolution of the system states (top left), FSA state (top right), system ID weights (bottom left), and StaF weights (bottom right) over time.}
    \label{fig:sim_other}
\end{figure}




\section{Conclusion and Future Work}\label{sec:conclusion}
We presented a hybrid systems approach to synthesizing optimal control policies for uncertain systems from scLTL formulas. By adopting tools from hybrid systems and optimal control theory, we provided Lyapunov-like conditions for scLTL formula satisfaction and showed that the value function in conjunction with a DTA metric forms a barrier certificate to provide guarantees of scLTL satisfaction. To circumvent the computational demands of solving a sequence of HJB equations, we take a learning-based approach to approximately solve a sequence of reach-avoid optimal control problems online without requiring full knowledge of the system dynamics. Furthermore, we developed sufficient conditions for which this approximate solution approach maintains formal guarantees and supported these claims with a numerical example.   

A limitation of our approach that will be explored in future work is the inability to formally verify if the sufficient conditions of Theorem \ref{thm:barrier3} are satisfied for a given set of learning parameters. This technical challenge stems from the uniformly ultimately bounded convergence results of Proposition \ref{prop:UUB}, hence establishing an asymptotic convergence result, as in \cite{VamvoudakisTNNLS15}, may provide a pathway towards eliminating this complication. Despite this, we illustrated through numerical examples that this condition is not overly restrictive in practice. Similar to most Lyapunov-based approaches, the conditions for each result are only sufficient and thus our approach is not complete; however, our approach is sound in the sense that satisfaction of the sufficient conditions implies satisfaction of the given specification. Directions for future research include extensions to generic LTL formulas and decentralized systems where distributed teams of agents must act locally to satisfy global objectives.

\section{Acknowledgments}
This work was partially supported by the NSF under grants IIS-1723995, IIS-2024606,
and GRFP DGE-1840990. Any opinions, findings, and conclusions or
recommendations expressed in this material are those of the author(s) and do not necessarily reflect the views of the NSF. We thank the anonymous reviewers for their helpful comments and suggestions which have improved the quality of this paper.

\section*{Appendix}
The constants used in the proof of Proposition \ref{prop:UUB} are defined as follows:\\ $\varphi_{a}\triangleq \tfrac{3\sqrt{3}(k_{c1}+k_{c2})\overline{\|G_{\sigma}\|}\overline{\|W\|}}{64\sqrt{\gamma_1}} + \tfrac{\overline{\|\nabla_e W\|}\overline{\|G_R\nabla_e\sigma^T\|}}{2},\,
\varphi_{c\theta}\triangleq\tfrac{3\sqrt{3}(k_{c1}+k_{c2})\overline{\|\Delta_{F}\|}\big(\overline{\|Y\|} + \overline{\|gg^{\dag} \|}\overline{\|Y_d\|}\big)}{16\sqrt{\gamma_1}},\,
\varphi_{ac}\triangleq k_{a1} + \tfrac{3\sqrt{3}(k_{c1}+k_{c2})\overline{\|G_{\sigma}\| \|W\|}}{64\sqrt{\gamma_1}} + \tfrac{\overline{\|\nabla_e W\|\|G_R\nabla_e\sigma^T \|}}{2\underline{\Gamma}},\,
\iota_c\triangleq\tfrac{3\sqrt{3}(k_{c1}+k_{c2})\overline{\|\Delta\|}}{16\sqrt{\gamma_1}} + \tfrac{\overline{\|\nabla_e W\|}}{\underline{\Gamma}}\big( \overline{\|F\|} + \tfrac{\overline{\|G_R\nabla_e\sigma^TW \|}}{2} + \tfrac{\overline{\|G_R\nabla_e\bar{B}^T \|}}{2}\big),\,
\iota_a\triangleq k_{a2}\overline{\|W\|} + \tfrac{3\sqrt{3}(k_{c1}+k_{c2})\overline{\|G_{\sigma}\| \|W\|}^2}{64\sqrt{\gamma_1}} + \overline{\|\nabla_e W\|}\big(\overline{\|F\|}  + \tfrac{\overline{\|G_R\nabla_e\sigma^TW\|}}{2} +\tfrac{\overline{\|G_R\nabla_e\bar{B}^T\|}}{2}\big) + \tfrac{\overline{\|\nabla_e V^* \|}\overline{\|G_R\nabla_e\sigma^T\|}}{2},\,Y_d\triangleq Y(x_d)$.

\bibliographystyle{ieeetr}
\bibliography{ms}

\begin{thebibliography}{10}

\bibitem{Teel_book}
R.~Goebel, R.~G. Sanfelice, and A.~R. Teel, {\em Hybrid dynamical systems:
  modeling, stability, robustness}.
\newblock Princeton University Press, 2012.

\bibitem{ModelChecking_book}
C.~Baier and J.~P. Katoen, {\em Principles of model checking}.
\newblock MIT Press, 2008.

\bibitem{Belta_book}
C.~Belta, B.~Yordanov, and E.~A. Gol, {\em Formal methods for discrete-time
  dynamical systems}.
\newblock Springer, 2017.

\bibitem{TomlinAutomatica99}
J.~Lygeros, C.~Tomlin, and S.~S. Sastry, ``Controllers for reachability
  specifications for hybrid systems,'' {\em Automatica}, vol.~35, pp.~349--370,
  1999.

\bibitem{TomlinTAC05}
I.~M. Mitchell, A.~M. Bayen, and C.~J. Tomlin, ``A time-dependent
  hamilton–jacobi formulation of reachable sets for continuous dynamic
  games,'' {\em IEEE Trans. Autom. Control}, vol.~50, no.~7, pp.~947--957,
  2005.

\bibitem{Tabuada_book}
P.~Tabuada, {\em Verification and control of hybrid systems: a symbolic
  approach}.
\newblock Spring Science \& Business Media, 2009.

\bibitem{BeltaTAC08}
M.~Kloetzer and C.~Belta, ``A fully automated framework for control of linear
  systems from temporal logic specifications,'' {\em IEEE Trans. Autom.
  Control}, vol.~53, no.~1, pp.~287--297, 2008.

\bibitem{TabuadaTAC06}
P.~Tabuada and G.~Pappas, ``Linear time logic control of discrete-time linear
  systems,'' {\em IEEE Trans. Autom. Control}, vol.~51, no.~12, pp.~1862--1877,
  2006.

\bibitem{BeltaARCRAS19}
C.~Belta and S.~Sadraddini, ``Formal methods for control synthesis: An
  optimization perspective,'' {\em Annual Review of Control, Robotics, and
  Autonomous Systems}, vol.~2, pp.~115--140, 2019.

\bibitem{STL}
O.~Maler and D.~Nickovic, ``Monitoring temporal properties of continuous
  signals,'' {\em Formal Techniques, Modelling and Analysis of Timed and
  Fault-Tolerant Systems}, pp.~152--166, 2004.

\bibitem{MurrayHSCC15}
V.~Raman, A.~Donze, D.~Sadigh, R.~M. Murray, and S.~A. Seshia, ``Reactive
  synthesis from signal temporal logic specifications,'' in {\em Proceedings of
  the 18th International Conference on Hybrid Systems: Computation and
  Control}, pp.~239--248, 2015.

\bibitem{SadraAllerton15}
S.~Sadraddini and C.~Belta, ``Robust temporal logic model predictive control,''
  in {\em Proceedings of the 53rd Annual Allerton Conference on Communication,
  Control, and Computing}, pp.~772--779, 2015.

\bibitem{AmesTAC17}
A.~D. Ames, X.~Xu, J.~W. Grizzle, and P.~Tabuada, ``Control barrier function
  based quadratic programs for safety critical systems,'' {\em IEEE Trans.
  Autom. Control}, vol.~62, no.~8, pp.~3861--3876, 2017.

\bibitem{CooganCDC18}
M.~Srinivasan, S.~Coogan, and M.~Egerstedt, ``Control of multi-agent systems
  with finite time control barrier certificates and temporal logic,'' in {\em
  Proc. Conf. Decis. Control}, pp.~1991--1996, 2018.

\bibitem{LarsLCSS19}
L.~Lindemann and D.~V. Dimarogonas, ``Control barrier functions for signal
  temporal logic tasks,'' {\em IEEE Contr. Syst. Lett.}, vol.~3, no.~1,
  pp.~96--101, 2019.

\bibitem{TopcuCDC16}
I.~Papusha, J.~Fu, U.~Topcu, and R.~M. Murray, ``Automata theory meets
  approximate dynamic programming: Optimal control with temporal logic
  constraints,'' in {\em Proc. Conf. Decis. Control}, pp.~434--440, 2016.

\bibitem{TopcuHSCC17}
J.~Fu, I.~Papusha, and U.~Topcu, ``Sampling-based approximate optimal control
  under temporal logic constraints,'' in {\em Proceedings of the 20th
  International Conference on Hybrid Systems: Computation and Control},
  pp.~227--235, 2017.

\bibitem{VamvoudakisACC20}
C.~Sun and K.~G. Vamvoudakis, ``Continuous-time safe learning with temporal
  logic constraints in adversarial environments,'' in {\em Proc. Amer. Control
  Conf.}, pp.~4786--4791, 2020.

\bibitem{CooganCDC14}
D.~Sadigh, E.~S. Kim, S.~Coogan, S.~S. Sastry, and S.~A. Seshia, ``A learning
  based approach to control synthesis of markov decision processes for linear
  temporal logic specifications,'' in {\em Proc. Conf. Decis. Control},
  pp.~1091--1096, 2014.

\bibitem{BeltaCDC16}
D.~Aksaray, A.~Jones, Z.~Kong, M.~Schwager, and C.~Belta, ``Q-learning for
  robust satisfaction of signal temporal logic specifications,'' in {\em Proc.
  Conf. Decis. Control}, pp.~6565 -- 6570, 2016.

\bibitem{SadraHSCC18}
S.~Sadraddini and C.~Belta, ``Formal guarantees in data-driven model
  identification and control synthesis,'' in {\em Proceedings of the 21st
  International Conference on Hybrid Systems: Computation and Control},
  pp.~147--156, 2018.

\bibitem{SadraCDC17}
S.~Sadraddini and C.~Belta, ``Formal methods for adaptive control of dynamical
  systems,'' in {\em Proc. Conf. Decis. Control}, pp.~1782--1787, 2017.

\bibitem{CohenCDC20}
M.~H. Cohen and C.~Belta, ``Approximate optimal control for safety-critical
  systems with control barrier functions,'' in {\em Proc. Conf. Decis.
  Control}, pp.~2062--2067, 2020.

\bibitem{DimosACC18}
A.~Bisoffi and D.~V. Dimarogonas, ``A hybrid barrier certificate approach to
  satisfy linear temporal logic specifications,'' in {\em Proc. Amer. Control
  Conf.}, pp.~634--639, 2018.

\bibitem{RushiAutomatica16}
R.~Kamalapurkar, P.~Walters, and W.~E. Dixon, ``Model--based reinforcement
  learning for approximate optimal regulation,'' {\em Automatica}, vol.~64,
  pp.~94--104, 2016.

\bibitem{RushiAutomatica16_StaF}
R.~Kamalapurkar, J.~A. Rosenfeld, and W.~E. Dixon, ``Efficient model--based
  reinforcement learning for approximate online optimal control,'' {\em
  Automatica}, vol.~74, pp.~247--258, 2016.

\bibitem{RushiTNNLS19}
J.~A. Rosenfeld, R.~Kamalapurkar, and W.~E. Dixon, ``The state following (staf)
  approximation method,'' {\em IEEE Trans. Neural Netw. Learn. Syst.}, vol.~30,
  no.~6, pp.~1716--1730, 2019.

\bibitem{scLTL}
O.~Kupferman and M.~Y. Vardi, ``Model checking of safety properties,'' {\em
  Formal Methods in System Design}, vol.~19, pp.~291 -- 314, 2001.

\bibitem{scheck2}
T.~Latvala, ``Efficient model checking of safety properties,'' in {\em Model
  Checking Software}, pp.~74--88, 2003.

\bibitem{DimosTAC20}
A.~Bisoffi and D.~V. Dimarogonas, ``Satisfaction of linear temporal logic
  specifications through recurrence tools for hybrid systems,'' {\em IEEE
  Trans. Autom. Control}, 2020.

\bibitem{BeltaAutomatica14}
X.~Ding, M.~Lazar, and C.~Belta, ``Ltl receding horizon control for finite
  deterministic systems,'' {\em Automatica}, vol.~50, pp.~399--408, 2014.

\bibitem{SerlinIROS18}
Z.~Serlin, K.~Leahy, R.~Tron, and C.~Belta, ``Distributed sensing subject to
  temporal logic constraints,'' in {\em Proc. IEEE/RSJ Int. Conf. Intel. Robot.
  Syst.}, pp.~4862--4868, 2018.

\bibitem{RushiTNNLS16}
R.~Kamalapurkar, L.~Andrews, P.~Walters, and W.~E. Dixon, ``Model--based
  reinforcement learning for infinite--horizon approximate optimal tracking,''
  {\em IEEE Trans. Neural Netw. Learn. Syst.}, vol.~28, no.~3, pp.~753--758,
  2016.

\bibitem{AmesL4DC20}
A.~Taylor, A.~Singletary, Y.~Yue, and A.~Ames, ``Learning for safety-critical
  control with control barrier functions,'' in {\em Proc. Conf. Learn. Dyn.
  Contr.}, vol.~120 of {\em PLMR}, pp.~708--717, 2020.

\bibitem{WillisAutomatica04}
A.~G. Willis and W.~P. Heath, ``Barrier function based model predictive
  control,'' {\em Automatica}, vol.~40, pp.~1415--1422, 2004.

\bibitem{PanagouTAC16}
D.~Panagou, D.~M. Stipanovic, and P.~G. Voulgaris, ``Distributed coordination
  control for multi-robot networks using lyapunov-like barrier functions,''
  {\em IEEE Trans. Autom. Control}, vol.~61, no.~3, pp.~617--632, 2016.

\bibitem{PatTRO20}
P.~Deptula, H.~Chen, R.~Licitra, J.~A. Rosenfeld, and W.~E. Dixon,
  ``Approximate optimal motion planning to avoid unknown moving avoidance
  regions,'' {\em IEEE Trans. Robot.}, vol.~32, no.~2, pp.~414--430, 2020.

\bibitem{Khalil_book}
H.~K. Khalil, {\em Nonlinear Systems, 3rd edition}.
\newblock Prentice Hall, 2002.

\bibitem{ADP_survey1}
F.~L. Lewis, D.~Vrabie, and K.~G. Vamvoudakis, ``Reinforcement learning and
  feedback control: Using natural decision methods to design optimal adaptive
  controllers,'' {\em IEEE Control Systems}, vol.~32, no.~6, pp.~76--105, 2012.

\bibitem{ADP_survey2}
B.~Kiumarsi, K.~G. Vamvoudakis, H.~Modares, and F.~L. Lewis, ``Optimal and
  autonomous control using reinforcement learning: A survey,'' {\em IEEE Trans.
  Neural Netw. Learn. Syst.}, vol.~29, no.~6, pp.~2042--2062, 2017.

\bibitem{LewisAJC99}
F.~L. Lewis, ``Nonlinear network structures for feedback control,'' {\em Asian
  J. Control}, vol.~1, no.~4, pp.~205--228, 1999.

\bibitem{Chowdhary}
G.~Chowdhary, {\em Concurrent learning for convergence in adaptive control
  without persistency of excitation}.
\newblock PhD thesis, Georgia Institute of Technology, Atlanta, GA, 2010.

\bibitem{ICL}
A.~Parikh, R.~Kamalapurkar, and W.~E. Dixon, ``Integral concurrent learning:
  Adaptive control with parameter convergence using finite excitation,'' {\em
  Int. J. Adapt. Control Signal Process.}, vol.~33, no.~12, pp.~1775--1787,
  2019.

\bibitem{PatCDC18}
P.~Deptula, Z.~I. Bell, F.~M. Zegers, R.~A. Licitra, and W.~E. Dixon, ``Single
  agent indirect herding via approximate dynamic programming,'' in {\em Proc.
  Conf. Decis. Control}, pp.~7136--7141, 2018.

\bibitem{Dixon_book}
W.~E. Dixon, A.~Behal, D.~M. Dawson, and S.~Nagarkatti, {\em Nonlinear Control
  of Engineering Systems: A Lyapunov-Based Approach}.
\newblock Birkhauser: Boston, 2003.

\bibitem{PatAutomatica20}
P.~Deptula, Z.~I. Bell, E.~A. Doucette, J.~W. Curtis, and W.~E. Dixon,
  ``Data-based reinforcement learning approximate optimal control for an
  uncertain nonlinear system with control effectiveness faults,'' {\em
  Automatica}, vol.~116, 2020.

\bibitem{PatTNNLS18}
P.~Deptula, J.~A. Rosenfeld, R.~Kamalapurkar, and W.~E. Dixon, ``Approximate
  dynamic programming: combining regional and local state following
  approximations,'' {\em IEEE Trans. Neural Netw. Learn. Syst.}, vol.~29,
  no.~6, pp.~2154--2166, 2018.

\bibitem{Rushi_book}
R.~Kamalapurkar, P.~Walters, J.~A. Rosenfeld, and W.~E. Dixon, {\em
  Reinforcement Learning for Optimal Feedback Control: A Lyapunov-Based
  Approach}.
\newblock Springer, 2018.

\bibitem{ChowdharyACC11}
G.~Chowdhary and E.~Johnson, ``A singular value maximizing data recording
  algorithm for concurrent learning,'' in {\em Proc. Amer. Control Conf.},
  pp.~3547--3552, 2011.

\bibitem{VamvoudakisTNNLS15}
K.~G. Vamvoudakis, M.~F. Miranda, and J.~P. Hespanha, ``Asymptotically stable
  adaptive--optimal control algorithm with saturating actuators and relaxed
  persistence of excitation,'' {\em IEEE Trans. Neural Netw. Learn. Syst.},
  vol.~27, no.~11, pp.~2386--2398, 2015.

\end{thebibliography}

\end{document}